\newtheorem{thm}{Theorem}
\newtheorem{cor}{Corollary}
\newtheorem{lem}{Lemma}
\newtheorem{fact}{Fact}
\newcommand{\expect}[1]{\mathbb{E}\left\{#1\right\}}
\newcommand{\defequiv}{\mbox{\raisebox{-.3ex}{$\overset{\vartriangle}{=}$}}}
\newcommand{\bv}[1]{{\boldsymbol{#1} }}
\newcommand{\script}[1]{{{\cal{#1} }}}
\newcommand{\WCG}{\script{WC}(\script{G})}
\begin{document}

\title
  {Dynamic Index Coding for Wireless Broadcast Networks}
\author{Michael J. Neely , Arash Saber Tehrani , Zhen Zhang$\vspace{-.3in}$%University of Southern California% \\ http://www-rcf.usc.edu/$\sim$mjneely%
\thanks{The authors are with the  Electrical Engineering department at the University
of Southern California, Los Angeles, CA.} 
\thanks{This material is supported in part  by one or more of 
the following: the DARPA IT-MANET program
grant W911NF-07-0028, the NSF Career grant CCF-0747525,  NSF grant 0964479, the 
Network Science Collaborative Technology Alliance sponsored
by the U.S. Army Research Laboratory W911NF-09-2-0053.}}

\markboth{}{Neely}

\maketitle

\begin{abstract}   
We consider a wireless broadcast station that transmits packets
to multiple users.  The packet requests for each user may overlap, 
and some users may already have certain packets.  This presents
a problem of broadcasting in the presence of side information, 
and is a generalization of the well known (and unsolved) index coding problem
of information theory.    
Rather than achieving the full capacity region, we develop a \emph{code-constrained
capacity region},  which restricts attention to a pre-specified set of 
coding actions.  We develop a dynamic max-weight algorithm that 
allows for random packet arrivals and supports any traffic inside the
code-constrained capacity region.  Further, we provide a simple
set of codes based on cycles in the underlying demand graph. We
show these codes are optimal for a class of broadcast relay
problems.
\end{abstract}

\section{Introduction} 

Consider a wireless broadcast station that transmits packets to $N$ wireless users.  
Packets randomly arrive to the broadcast station.  Each packet $p$ is desired
by one or more users in the set $\{1, \ldots, N\}$. 
Further, there may be one or more users that already have the packet stored in their cache. 
 The broadcast station must efficiently transmit all packets to their desired users. 
 We assume time is slotted with unit slots $t \in \{0, 1, 2, \ldots\}$, and that a single packet 
 can be transmitted by the broadcast station on every slot.  This packet is received error-free
 at all users.  We assume that only the broadcast station can transmit, so that users cannot transmit 
 to each other. 

If the broadcast station has $P$ packets at time 0, and no more packets arrive, then the mission
can easily be completed in $P$ slots by transmitting the packets one at a time.   However, this 
approach ignores the side-information available at each user.  Indeed, it is often possible to complete
the mission in fewer than $P$ slots if packets are allowed to be  \emph{mixed}  
before transmission.  A simple and well known example for 2 users is the following:  Suppose
user $1$ has packet $B$ but wants
packet $A$, while user $2$ has packet $A$ but wants packet $B$.  Sending each packet individually
would take 2 slots, but these demands can be met in just one slot by transmitting the mixed
packet $A+B$, the bit-wise  XOR of $A$ and $B$.  Such examples are introduced in \cite{two-way-net-code}\cite{cope-05}\cite{cope-06} 
in the 
context of \emph{wireless network coding}. 

The general problem,  where each packet is contained as side information in an arbitrary subset of the $N$ users, 
is much more complex.  This problem is introduced by Birk and Kol in \cite{birk-index98}\cite{birk-index-it06},  and is
known as the \emph{index coding problem}. 
Methods for completing a general index coding mission in minimum time are unknown.  However, 
the recent work \cite{bar-yossef-index-code2011} 
shows that if one restricts to a class of linear codes, then the minimum time
is equal to the minimum rank of a certain matrix completion problem.  The matrix completion problem
is NP-hard in general, and hence index coding is complex even when 
restricted to a simpler class of codes.

Nevertheless, it is important to develop systematic approaches to these problems.  
That is because current wireless cellular systems cannot handle the huge traffic
demands that are expected in the near future.  This is largely due to the 
consistent growth of wireless video traffic.  Fortunately, much of the traffic is for \emph{popular content}.
That is, users often download the same information.  Thus, it is quite likely that a system of $N$ users
will have many instances of side information, where some users already have packets that others want. 
This naturally creates an index coding situation.  Thus, index coding is both rich in its mathematical complexity
and crucial for supporting future wireless traffic. 

The problem we consider in this paper is even more complex because packets can arrive
randomly over time.   This is a practical scenario and creates the need for a \emph{dynamic} approach to 
index coding.  
We assume there are $M$ \emph{traffic types}, where a type is defined by the subset of users
that \emph{desire} the packets and the subset that \emph{already has} the packets. Let $\lambda_m$ be the 
arrival rate, in packets/slot, for type $M$ traffic.   We approach this problem by restricting coding actions to an 
abstract set $\script{A}$.  We then show how to achieve the \emph{code constrained capacity region}
$\Lambda_{\script{A}}$, being the set of all rate vectors $(\lambda_m)_{m=1}^M$ that can be supported using coding
actions in the set $\script{A}$.   The set $\Lambda_{\script{A}}$ is typically a strict subset of the 
\emph{capacity region} $\Lambda$, which does not restrict the 
type of coding action.  Our work can be applied to any set $\script{A}$, and hence can be used in conjunction
with any desired codes.   However, we focus attention on a simple class of codes that involve only bit-wise XOR operations, 
based on cycles in the underlying demand graph.  In special cases of \emph{broadcast relay problems}, we show that these
codes can achieve the full capacity region $\Lambda$.

The capacity region $\Lambda$ is directly related to
the conceptually simpler \emph{static} problem of clearing a fixed batch of packets in minimum time.  Further, index coding
concepts are most easily developed 
in terms of the static problem.  Thus, this paper is divided into two parts:  We first introduce
the index coding problem in the static case, and we describe example coding actions in that case.  
Section \ref{section:dynamic} extends to the dynamic case and develops two max-weight index coding techniques, 
one that requires knowledge of the arrival rates $(\lambda_m)$, and one that does not.  The algorithms here are general 
and can also be used in other types of networks where controllers make sequences of actions, each action taking a different 
number of slots and delivering a different vector of packets. 

While the static index coding problem has been studied before \cite{bar-yossef-index-code2011}\cite{birk-index-it06}\cite{birk-index98}, 
our work provides new insight even in the
static case.  We introduce a new directed bipartite demand graph
that allows for arbitrary demand subsets and possibly ``multiple multicast''
situations, where some packets are desired by more than one user.  We also form a useful \emph{weighted compressed graph} 
that facilitates the solution to the minimum clearance time problem in certain cases. 
This extends the graph models in \cite{bar-yossef-index-code2011}, which 
do  not consider the possibility of multiple 
multicast sessions.  Work in
\cite{bar-yossef-index-code2011} 
develops a maximum acyclic subgraph bound on minimum clearance time for problems without multiple
multicast sessions. We extend this bound to our general problem using a different proof technique.  Further, we 
consider a class of \emph{broadcast relay problems} for which the bound can be achieved with equality.  
%For these
%problems, we can achieve the  full information theoretic
%capacity region $\Lambda$. 

The next section introduces index coding in the static case,  shows its relation to a bipartite demand graph, and 
presents the acyclic subgraph bound.  Section \ref{section:dynamic}  introduces the general dynamic formulation and develops
our max-weight algorithms.   Section \ref{section:broadcast} considers an important class of 
broadcast relay networks for which a simple set of codes are optimal.

\section{The Static Minimum Clearance Time Problem}

This section introduces the index coding problem in the static case, where we want to clear a fixed
batch of packets in minimum time.  
Consider a wireless system with $N$ users, $P$ packets, and a single broadcast station. 
We assume $N$ and $P$ are positive integers. Let 
$\script{N}$ and $\script{P}$ represent the set of users and packets, respectively: 
\[ \script{N} = \{1, \ldots, N\} \: \: , \: \: \script{P} = \{1, \ldots, P\} \] 
The broadcast station has all packets in the set $\script{P}$.  Each user $n \in \script{N}$
\emph{has} an arbitrary subset of packets $\script{H}_n \subseteq \script{P}$, and \emph{wants
to receive} an arbitrary subset of packets $\script{R}_n \subseteq \script{P}$, where $\script{H}_n \cap \script{R}_n = \phi$,
where $\phi$ represents the empty set.  Assume that all packets consist of $B$ bits, all packets 
are independent of each other, and the $B$-bit binary string for each packet is uniformly distributed
over each of the $2^B$ possibilities. 

We can represent this system by a \emph{directed bipartite demand graph} $\script{G}$ defined as follows (see Fig. \ref{fig:bipartite-dag}): 
\begin{itemize} 
\item User nodes $\script{N}$ are on the left. 
\item  Packet nodes $\script{P}$ are on the right.
\item A directed 
link $(n,p)$ from a user node  $n \in \script{N}$ to a packet node $p \in\script{P}$ exists if and only if 
user $n$ has packet $p$. That is, 
if and only if $p \in \script{H}_n$. 
\item A directed link $(p,n)$ from a packet node $p \in\script{P}$ to a user node $n \in \script{N}$ exists if and only if user $n$ wants
to receive  packet $p$.  That is, if and only if $p \in \script{R}_n$. 
\end{itemize} 
As an example for the 3-user, 5-packet graph of Fig. \ref{fig:bipartite-dag}, the \emph{have} and \emph{receive} sets for nodes
1 and 2 are: 
\begin{eqnarray*}
\script{H}_1 = \{5\} \: \: , \: \: \script{R}_1 = \{1, 2 \} \\
\script{H}_2 = \phi \: \:  , \: \:  \script{R}_2 = \{1, 2, 4\}  
\end{eqnarray*}

\begin{figure}[t]
   \centering
   \includegraphics[height=1.5in, width=1.5in]{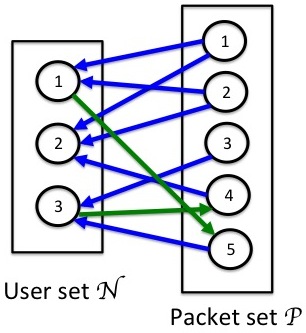} % requires the graphicx package
   \caption{An example directed bipartite demand graph with $3$ users and $5$ packets.}
   \label{fig:bipartite-dag}
\end{figure}

We restrict attention to packets that at least one node wants. Thus, without loss of
generality, throughout we assume the graph $\script{G}$ is such that all packet nodes $p \in \script{P}$ 
contain at least one outgoing link.   Thus: 
\begin{equation} \label{eq:P} 
 \script{P} = \{1, \ldots, P\} = \cup_{n=1}^N \script{R}_n 
 \end{equation} 

In this static problem, the broadcast station has all packets in the set $\script{P}$ at time $0$, and no more packets ever arrive.  
Every
slot $t \in \{0, 1, 2,\ldots\}$ the broadcast station can transmit one $B$-bit message over the broadcast channel.  
This message is received without
error at all of the user nodes in the set $\script{N}$.   The goal is for the broadcast station to send messages 
until all nodes receive the packets they desire.  

Define a \emph{mission-completing coding action with $T$ slots} to 
be a sequence of messages that the broadcast station transmits
over the course of $T$ slots, such that all users are able to decode their desired packets at the end of the $T$
slots.   We restrict attention to deterministic zero-error codes that enable decoding with probability 1.
%\footnote{This restricts attention
%to \emph{zero-error capacity}, although in this case where messages are received deterministically without noise, it can be shown
%that the zero-error capacity is the same as the capacity.}   
The initial information held by each user  $n \in \script{N}$ is given by the set of packets $\script{H}_n$ (possibly empty). 
Let $\script{M} \defequiv \{\script{M}_1, \ldots, \script{M}_T\}$ represent the messages transmitted by the broadcast station over the
course of the $T$ slot coding action.  
At the end of this action, each node $n \in \script{N}$ has information $\{ \script{H}_n, \script{M}\}$. 
Because the coding action is assumed to complete the mission, this information is enough for each node $n$ to decode
its desired packets $\script{R}_n$.  That is, we can write:
\begin{equation} \label{eq:decoding-equivalence}
 \{\script{H}_n, \script{M}\} \iff \{\script{H}_n, \script{M}, \script{R}_n\} 
 \end{equation} 
where the above represents \emph{equivalence in the information set}, meaning that the information on the left-hand-side
can be perfectly reconstructed from the information on the right-hand-side, and vice versa.  Clearly the information on the 
left in \eqref{eq:decoding-equivalence} 
is a subset of the information on the right, and hence can trivially be reconstructed.  The information on the right 
in \eqref{eq:decoding-equivalence} 
can be reconstructed from that on the left because the code is mission-completing. 

For a given graph $\script{G}$ with $P$ packet nodes, define $T_{min}(\script{G})$ as the \emph{minimum clearance time} of the graph, 
being the minimum number of slots required to complete the 
mission, considering all possible coding techniques.  Clearly $T_{min}(\script{G}) \leq P$.  Our goal is to understand
$T_{min}(\script{G})$. 

For a directed graph, we say that a 
\emph{simple directed cycle of length $K$}  is a sequence of nodes $\{n_1, n_2, \ldots, n_K, n_1\}$ 
such that $(n_i, n_{i+1})$ is a link in the graph for all $i \in \{1, \ldots, K-1\}$, $(n_K, n_1)$ is a link in the graph, and 
all nodes $\{n_1, \ldots, n_K\}$ involved in the cycle are distinct.  For simplicity, throughout this paper we use the 
term \emph{cycle} to represent a simple directed cycle. 
We say that the graph $\script{G}$ is \emph{acyclic} if it contains no cycles.   Note that 
directed acyclic graphs have a much different structure than undirected acyclic graphs.  Indeed, the graph in  
Fig. \ref{fig:bipartite-dag} is acyclic even though its undirected counterpart (formed by replacing all directed links
with undirected links) has cycles.  

Our first result is to prove that if a directed bipartite demand graph $\script{G}$ 
is acyclic, then coding cannot reduce the minimum clearance time.   This result was first proven in \cite{bar-yossef-index-code2011} 
in the case without
``multiple multicasts,'' so that each packet is desired by at most one user.  That result uses an argument based on machinery
of the mutual information function.  It also treats a more general case where codes can have errors.  Further, their proof is developed
as a consequence of a more general and more complex result.  Our work restricts to zero-error codes, but allows the possibility
of multiple-multicast sessions.  We also use a different proof technique which emphasizes
the logical consequences of users being able to decode their information.  Our proof uses only
 the following two facts: 

\begin{fact} \label{fact:leaf} Every directed acyclic graph with a finite number of nodes 
has at least one node with no outgoing links.  Such 
a node is called a ``leaf'' node. 
\end{fact}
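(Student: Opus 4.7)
The plan is to prove Fact \ref{fact:leaf} by contradiction, using the finiteness of the node set together with a pigeonhole argument to extract a cycle from any supposedly leaf-free DAG.

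First I would assume, for the sake of contradiction, that $\script{G}$ is a finite directed acyclic graph in which every node has at least one outgoing link. Let $V$ denote the (finite) set of nodes, with $|V| = K$. Starting from an arbitrary node $v_0 \in V$, I would construct a sequence $v_0, v_1, v_2, \ldots$ by choosing $v_{i+1}$ to be any node such that $(v_i, v_{i+1})$ is a link in $\script{G}$; such a choice is always possible by the contradiction hypothesis, so the sequence can be extended indefinitely.

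Next I would consider the initial segment $v_0, v_1, \ldots, v_K$. This segment contains $K+1$ nodes drawn from a set of size $K$, so by the pigeonhole principle there exist indices $i < j$ with $v_i = v_j$. Then the subsequence $v_i, v_{i+1}, \ldots, v_j$ gives a closed walk in $\script{G}$. From this closed walk I would extract a simple directed cycle in the standard way: choose the shortest pair of indices $i' < j'$ within $\{i, i+1, \ldots, j\}$ satisfying $v_{i'} = v_{j'}$; by minimality, the nodes $v_{i'}, v_{i'+1}, \ldots, v_{j'-1}$ must all be distinct, and the links $(v_{i'}, v_{i'+1}), \ldots, (v_{j'-1}, v_{i'})$ form a simple directed cycle. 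This contradicts the assumption that $\script{G}$ is acyclic, completing the proof.

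The argument is entirely routine and I do not anticipate a real obstacle; the only care needed is in passing from a repeated-vertex closed walk to a genuinely simple cycle (as required by the paper's definition of ``cycle''), which is handled by the minimal-pair choice above.
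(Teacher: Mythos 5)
Your proof is correct and is essentially the paper's own argument: the paper also traverses outgoing links from an arbitrary node and notes the walk must terminate because the graph is finite and acyclic, which is exactly your pigeonhole-plus-cycle-extraction reasoning phrased contrapositively. Your extra care in extracting a simple cycle from the closed walk is a fine (if routine) elaboration of the same idea.
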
 

\begin{fact} \label{fact:onelink} If the graph  contains only one user node, then $T_{min}(\script{G}) = P$, where $P$ is the number of 
packets that this user desires. 
\end{fact}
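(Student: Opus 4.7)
The plan is a simple distinguishability and counting argument at the bit level. With a single user, equation \eqref{eq:P} forces $\script{P} = \script{R}_1$, so the user desires all $P$ packets; combined with $\script{H}_1 \cap \script{R}_1 = \phi$ and $\script{H}_1 \subseteq \script{P}$, this also gives $\script{H}_1 = \phi$, i.e., the user begins with no side information at all. Since each of the $P$ packets is an independent uniformly random $B$-bit string, the global packet realization takes $2^{PB}$ equally likely possible values, and the goal is to show that any mission-completing zero-error code needs $T \geq P$ slots, which combined with the trivial upper bound will give equality.

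I would fix any $T$-slot deterministic mission-completing code. Since the broadcast station holds all packets, the code induces an encoder that maps each packet realization into a message sequence $\script{M} = (\script{M}_1,\ldots,\script{M}_T) \in \{0,1\}^{TB}$. The decoding equivalence \eqref{eq:decoding-equivalence}, together with $\script{H}_1 = \phi$, requires that the user reconstruct $\script{R}_1$ from $\script{M}$ alone. Hence the encoder must be injective in the packet realization: otherwise two distinct realizations of $\script{R}_1$ would produce identical observables at the user and could not be distinguished with probability one, contradicting zero-error decoding. Injectivity forces $2^{TB} \geq 2^{PB}$, so $T \geq P$. The reverse bound $T_{min}(\script{G}) \leq P$ is immediate by transmitting each of the $P$ packets uncoded, one per slot. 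Combining, $T_{min}(\script{G}) = P$.

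I do not expect a real obstacle. The only point worth being careful about is that the encoder is in principle allowed to be adaptive across slots and to depend on the full packet vector; but adaptivity simply produces some deterministic map from the packet realization into $\{0,1\}^{TB}$, and the counting argument above is insensitive to its internal structure. The restriction to deterministic zero-error codes eliminates any concern about vanishing-error relaxations that might otherwise complicate the distinguishability step.
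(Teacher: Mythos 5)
Your proof is correct. The paper gives no written proof of this fact---it is simply invoked as ``a basic information theory observation about the capacity of a single error-free link''---and your argument is the standard elementary formalization of exactly that observation: under convention \eqref{eq:P} the lone user has $\script{H}_1 = \phi$ and desires all $P$ packets, so a deterministic zero-error code must map the $2^{PB}$ equally likely packet realizations injectively into the $2^{TB}$ possible message sequences, forcing $T \geq P$, while $T_{min}(\script{G}) \leq P$ is trivial by uncoded transmission.
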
 

Fact \ref{fact:leaf} follows simply by starting at any node in the graph and traversing a path from node to node, using any
outgoing link, until we find a leaf node (such a path cannot continue forever because the graph is finite and has
no cycles).   Fact \ref{fact:onelink} is a basic information theory observation about the capacity of a single error-free link.

\begin{thm} \label{thm:acyclic}  If the graph $\script{G}$ is acyclic, then $T_{min}(\script{G}) = P$, where $P$ is the total number of 
packets in the graph. 
\end{thm}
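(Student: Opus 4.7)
The plan is to establish the upper bound $T_{min}(\script{G}) \leq P$ by the trivial code that transmits each packet individually, and the matching lower bound by induction on the number of users $N$. The base case $N=1$ follows from Fact \ref{fact:onelink}: the hypotheses $\script{P} = \script{R}_1$ and $\script{H}_1 \cap \script{R}_1 = \phi$ together force $\script{H}_1 = \phi$, reducing to a single no-side-information user desiring all $P$ packets.

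For the inductive step, I apply Fact \ref{fact:leaf} to extract a leaf of $\script{G}$. Because every packet node has an outgoing link (from $\script{P} = \cup_n \script{R}_n$), this leaf must be a user node, say $n^*$, with $\script{H}_{n^*} = \phi$. I then form a reduced graph $\widetilde{\script{G}}$ by deleting $n^*$ together with every packet in $\script{R}_{n^*}$, and setting $\widetilde{\script{H}}_n = \script{H}_n \setminus \script{R}_{n^*}$ and $\widetilde{\script{R}}_n = \script{R}_n \setminus \script{R}_{n^*}$ for each remaining user. A direct check confirms $\widetilde{\script{G}}$ is still a valid acyclic demand graph on $N-1$ users with $P - |\script{R}_{n^*}|$ packets, so the induction hypothesis yields $T_{min}(\widetilde{\script{G}}) = P - |\script{R}_{n^*}|$.

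The crux is to transform any hypothetical $T$-slot code for $\script{G}$ into a $(T - |\script{R}_{n^*}|)$-slot code for $\widetilde{\script{G}}$, which combined with the induction hypothesis forces $T \geq P$. Write each packet assignment as $\pi = (\tilde{\pi}, r)$, with $\tilde{\pi}$ over $\widetilde{\script{P}}$ and $r$ over $\script{R}_{n^*}$. The decoding constraint for $n^*$ (with empty side information) is the logical statement that $\script{M}(\tilde{\pi}, r)$ determines $r$, so the image sets $V_r = \{\script{M}(\tilde{\pi}, r) : \tilde{\pi}\}$ are pairwise disjoint subsets of $\{0,1\}^{TB}$. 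A pigeonhole argument then supplies some $r^*$ with $|V_{r^*}| \leq 2^{(T - |\script{R}_{n^*}|)B}$. With $r$ frozen at $r^*$, the map $\tilde{\pi} \mapsto \script{M}(\tilde{\pi}, r^*)$ is a valid $T$-slot code for $\widetilde{\script{G}}$: each surviving user rebuilds its full original side information by combining $\widetilde{\script{H}}_n$ with the publicly fixed constants in $r^*$, then invokes its original decoder. Because the image of this code has at most $2^{(T - |\script{R}_{n^*}|)B}$ values, a bijective relabeling into $(T - |\script{R}_{n^*}|)$ blocks of $B$ bits compresses it into a genuine $(T - |\script{R}_{n^*}|)$-slot coding action.

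The step I expect to require the most care is this last passage: verifying that the restricted encoder $\script{M}(\cdot, r^*)$ really does furnish a correct code for $\widetilde{\script{G}}$ --- in particular, that every dependence on packets in $\script{H}_n \cap \script{R}_{n^*}$ is absorbed without ambiguity into the constant $r^*$ --- and confirming that a pure image-size relabeling legitimately counts as a shorter coding action under the paper's model, rather than merely a reduction in raw bit count.
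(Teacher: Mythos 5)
Your proof is correct, but it follows a genuinely different route from the paper's. The paper also begins by extracting a leaf user node via Fact \ref{fact:leaf}, but it then keeps the code and the graph fixed and runs an induction over successively peeled leaf users, showing the purely logical chain $\{\script{M}\} \iff \{\script{M}, \script{R}_{n_1}, \ldots, \script{R}_{n_k}\}$ (using $\script{H}_{n_{k+1}} \subseteq \{\script{R}_{n_1},\ldots,\script{R}_{n_k}\}$ at each step); in the end the messages alone determine all $P$ packets, so the leaf user $n_1$, which has no side information, is a single-user receiver of $P$ packets and Fact \ref{fact:onelink} gives $T \geq P$ in one stroke. You instead induct on the number of users and perform code surgery: delete the leaf user and the packets it wants, condition the code on a pigeonhole-selected realization $r^*$ of those packets (the disjointness of the sets $V_r$ is exactly the zero-error decodability at the side-information-free leaf), and compress the surviving image into $T-|\script{R}_{n^*}|$ blocks of $B$ bits to obtain a shorter mission-completing code for the reduced graph. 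Your reduction is sound --- the conditioning-to-constants step is legitimate because $r^*$ is fixed at design time, every residual packet still has an outgoing link so the induction hypothesis applies, and the nonemptiness of the $V_r$'s even guarantees $T \geq |\script{R}_{n^*}|$ so the relabeling makes sense --- and it essentially re-derives the content of Fact \ref{fact:onelink} inside the induction via counting. What the paper's argument buys is economy: it never modifies the code, never needs the uniform-packet counting machinery, and invokes the single-link capacity fact only once; what your argument buys is a self-contained operational reduction ($T$-slot code for $\script{G}$ yields a $(T-|\script{R}_{n^*}|)$-slot code for $\widetilde{\script{G}}$), which is a reusable statement in its own right and makes the role of the zero-error assumption explicit. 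Both handle multiple multicasts without difficulty.
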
 
\begin{proof} 
See Appendix A.
\end{proof} 

As an example, because the graph $\script{G}$ in Fig. \ref{fig:bipartite-dag} is acyclic, we have $T_{min}(\script{G}) = 5$. 
Theorem \ref{thm:acyclic} shows that coding cannot help if $\script{G}$ is acyclic, so that the best one can do is just transmit
all packets one at a time.   Therefore, any type of coding must exploit cycles on the demand graph. 
%This motivates the simple class of codes described
%in the next subsection. 

\subsection{Lower Bounds from Acyclic Subgraphs} 
 
 Theorem \ref{thm:acyclic} provides a simple lower bound on $T_{min}(\script{G})$ for any graph $\script{G}$. 
 Consider a graph $\script{G}$, and form a subgraph $\script{G}'$ by performing one or more of the following \emph{pruning
 operations}: 
 \begin{itemize} 
 \item Remove a packet node, and all of its incoming and outgoing links. 
 \item Remove a user node, and all of its incoming and outgoing links. 
 \item Remove a packet-to-user link $(p,n)$. 
 \end{itemize} 
 After performing these operations, we must also delete any residual packets that have no outgoing links. 
 Any sequence of messages that completes the mission for the original graph 
 $\script{G}$ will \emph{also} complete the mission for the subgraph $\script{G}'$. 
 This leads to the following simple lemma. 
 
  \begin{lem}  For any subgraph $\script{G}'$ formed from a graph $\script{G}$ by one or more of
  the above pruning operations, we have: 
  \[ T_{min}(\script{G}') \leq T_{min}(\script{G}) \]
 \end{lem}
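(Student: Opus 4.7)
The plan is to prove the lemma by handling each of the three pruning operations in turn, then iterating, since any subgraph $\script{G}'$ reachable by pruning is obtained by a finite sequence of single operations (including the residual-packet cleanup, which is just an instance of operation~1 applied to a packet no user wants). For each single operation, I would take an arbitrary mission-completing code for $\script{G}$ that uses $T=T_{min}(\script{G})$ slots and exhibit a mission-completing code for $\script{G}'$ that uses at most $T$ slots.

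Two of the three operations are essentially immediate. Removing a user node $n$ (operation~2): run the same $T$-slot broadcast sequence; the remaining users in $\script{G}'$ have identical side information and demands as in $\script{G}$, so by \eqref{eq:decoding-equivalence} each of them still decodes its full request set $\script{R}_n$. Removing a packet-to-user link $(p,n)$ (operation~3): again run the same sequence; user $n$'s new demand set is $\script{R}_n\setminus\{p\}\subseteq\script{R}_n$, and it could decode the larger set in $\script{G}$, so it can certainly decode the smaller set now. The residual-packet cleanup step is covered by operation~1 below, since it only removes packets that no user wants, which does not affect any user's decoding problem.

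The one case that needs an actual construction is operation~1, removing a packet node $p$ and its incident links. Here the broadcast station in $\script{G}'$ no longer possesses $p$, so it cannot literally replay the $\script{G}$-messages. My plan is the standard \emph{fix-to-constant} reduction: let $p$ be replaced by a publicly known constant string (say the all-zeros $B$-bit word). The $\script{G}'$ broadcaster evaluates the $\script{G}$-encoder as if this constant were the value of $p$ and broadcasts the resulting $T$ messages. Every user $n$ in $\script{G}'$ runs the $\script{G}$-decoder, supplementing its side information with the known constant whenever the decoder would have consulted $p$. Because the original code works for \emph{every} realization of the $P$ packets, it works in particular on the realization in which $p$ equals this fixed constant, so each user $n$ recovers $\script{R}_n$ in $\script{G}$ and hence $\script{R}_n' = \script{R}_n\setminus\{p\}$ in $\script{G}'$. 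The main (mild) subtlety is verifying that no user's decoder ever needs $p$ itself as an \emph{output}, which is ruled out because $p$ has been deleted from every $\script{R}_n'$. Iterating through any sequence of pruning operations then yields $T_{min}(\script{G}')\le T_{min}(\script{G})$.
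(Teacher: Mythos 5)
Your proof is correct and takes essentially the same route as the paper, which disposes of the lemma with the single observation that any mission-completing message sequence for $\script{G}$ also completes the (reduced) mission for $\script{G}'$. Your fix-to-constant argument for packet-node removal merely makes rigorous a detail the paper leaves implicit—namely that users who lose a deleted packet from their side information, and the broadcaster who no longer holds it, can substitute a publicly known constant value—so it is the same underlying reduction, carried out more carefully.
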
 
 
 Combining this lemma with Theorem \ref{thm:acyclic}, we see that we can take a general graph $\script{G}$ with 
 cycles, and then perform the above pruning operations to reduce to an acyclic subgraph $\script{G}'$.  Then $T_{min}(\script{G})$ is
 lower bounded by the number of packets in this subgraph.   Thus, the best lower bound corresponds to the acyclic subgraph 
generated from the above operations, and that has 
 the largest number of remaining packets.  Note that the above pruning operations do not include the removal of a user-to-packet 
 link $(n,p)$ (without removing
 either the entire user or the entire packet), because such links represent side information that 
 can be helpful to the mission.

\subsection{Particular code actions} \label{section:cyclic-codes} 

Because the general index coding problem is difficult, 
it is useful to restrict the solution space to consider only 
sequences of simple types of coding actions.  Recall that coding actions must exploit
cycles.  One natural action is the following: 
Suppose we have a cycle in $\script{G}$ that involves a subset of $K$ users.  For simplicity label the users $\{1, \ldots, K\}$.
In the cycle, user 2 wants to receive a packet $X_1$ that user 1 has, user 3 wants to receive a packet $X_2$ that user 2
has, and so on. Finally,  user $1$ wants to receive
a packet $X_K$ that user $K$ has.    
The structure can be represented by: 
\begin{equation} \label{eq:cycle-example} 
 1 \rightarrow 2 \rightarrow 3 \rightarrow \ldots \rightarrow K \rightarrow 1 
 \end{equation} 
where an arrow from one user to another means the left user has a packet the right user wants. 
Of course, the users in this cycle may want many other packets, but we are restricting attention only to the 
packets $X_1, \ldots, X_K$.  Assume these packets are all distinct. 

In such a case, we can  satisfy all $K$ users in the cycle with the following $K-1$ transmissions: 
%Let $X_k$ represent the $B$-bit binary string that user $k$ has.  
For each $k \in \{1, \ldots, K-1\}$, the broadcast station transmits a message $\script{M}_k \defequiv X_{k} + X_{k+1}$, 
where addition represents the mod-2 summation of the bits in the packets.   Each user $k \in \{2, \ldots, K\}$ receives its
desired information by adding $\script{M}_{k-1}$ to its side information:  
\[ X_k + \script{M}_{k-1} = X_k + (X_{k-1} + X_k) = X_{k-1} \]
Finally, user $1$ performs the following computation (using the fact that it already has packet $X_1$): 
\begin{eqnarray*}
&& \hspace{-.3in} X_1 + \script{M}_1 + \script{M}_2 + \ldots + \script{M}_{K-1}   \\
&=& X_1 + (X_1 + X_2) + (X_2 + X_3) + \ldots + (X_{K-1} + X_K) \\
 &=& (X_1 + X_1) + (X_2+X_2) + \ldots + (X_{K-1} + X_{K-1}) \\
 && + X_K  \\
 &=& X_K
\end{eqnarray*}

Thus, such an operation can deliver $K$ packets in only $K-1$ transmissions. 
We call such an action a \emph{$K$-cycle coding action}.  We define a \emph{1-cycle coding action} to be a direct transmission.   
Note that $2$-cycle coding actions are the most ``efficient,'' having a packet/transmission efficiency ratio 
of $2/1$, compared to $K/(K-1)$ for $K\geq2$, 
which approaches 1 (the efficiency of a direct transmission) as $K\rightarrow\infty$.  While it is generally sub-optimal to restrict
to such cyclic coding actions, doing so can still provide significant gains in comparison to direct transmission.  Further, we show in 
Section \ref{section:broadcast} that such actions are optimal for certain classes of broadcast relay problems. 

Another important type of code action takes advantage of ``double-cycles'' in $\script{G}$:  Suppose for example
that user 1 wants packet $A$ and has packets $B$ and $C$, user 2 wants packet $B$ and has packets $A$ and $C$, and 
user 3 wants packet $C$ and has packets $A$ and $B$.  Then these demands can be fulfilled with the single transmission 
$A + B +C$, being a binary XOR of packets $A$, $B$, $C$.  The efficiency ratio of this action is $3/1$.

 \section{Dynamic Index Coding} \label{section:dynamic} 
 
 Now consider a dynamic setting where the broadcast station randomly receives packets from $M$ traffic
 flows.  Each flow $m \in \{1, \ldots, M\}$ contains fixed-length packets that must be delivered to a subset $\script{N}_m$ of the 
 users, and these packets are contained as side-information in a subset $\script{S}_m$ of the users.  We assume
 $\script{N}_m \cap \script{S}_m = \phi$, since any user $n \in \script{N}_m$ who wants the packet clearly does not already have
 the packet as side information.  
 In the general case, $M$ can be the number of all possible disjoint subset pair combinations.
 However, typically the value of $M$ will be much smaller than this, such as when each traffic flow represents packets from 
 a very large file, and there are only $M$ active file requests.  
 %For example, 
%each traffic flow might consist of packets that belong to a very large file, so that $M$ is the number of active file requests.  
%While files are of course finite, they are treated here as an infinite stream of packets that arrive with a certain rate (described in more
%detail below).  This gives rise to a well defined notion of \emph{ergodic capacity}, and we
%analyze the system under this assumption.  However, the max-weight algorithms we develop are highly adaptive and we expect
%them to react quickly to changes in traffic rates or file types.  
%Such adaptation indeed occurs for max-weight algorithms in other contexts (see \cite{sno-text}). 

Assume time is slotted with unit slots $t \in \{0, 1, 2,\ldots\}$, and let $\bv{A}(t) = (A_1(t), \ldots, A_M(t))$ be the number
of packets that arrive from each flow on slot $t$.  For simplicity of exposition, we assume the vector $\bv{A}(t)$ is i.i.d. 
over slots with expectation: 
\[ \expect{\bv{A}(t)} = \bv{\lambda} = (\lambda_1, \ldots, \lambda_M) \]
where $\lambda_m$ is the arrival rate of packets from flow $m$, in units of packets/slot. For simplicity, we assume that $A_m(t) \in \{0,1\}$ 
for all $m$ and all $t$, so that at most one new packet can arrive per flow per slot.   This is reasonable because the maximum delivery
rate in the system is one packet per slot, and so any packets that arrive as a burst can be ``smoothed'' and delivered to the network layer
at the broadcast station one slot at a time.   Packets of each flow $m$ are stored in a separate queue kept at the broadcast station, and exit
the queue upon delivery to their intended users.  

We now segment the timeline into  frames, each frame consisting of an integer number of slots.  At the beginning of each frame $r$, 
the network controller chooses a coding action $\alpha[r]$ 
within an abstract set $\script{A}$ of possible actions.  For each $\alpha \in \script{A}$, there is a \emph{frame size} 
$T(\alpha)$ and a \emph{clearance vector} $\bv{\mu}(\alpha)$.  The frame size $T(\alpha)$ is the number of slots
required to implement action $\alpha$, and is assumed to be a positive integer. The clearance vector
$\bv{\mu}(\alpha)$ has components $(\mu_1(\alpha), \ldots, \mu_M(\alpha))$, where $\mu_m(\alpha)$ is the number 
of type $m$ packets delivered as a result of action $\alpha$.  We assume $\mu_m(\alpha)$ is a non-negative
integer. When  frame $r$ ends, a new frame starts and the controller chooses a (possibly new) action $\alpha[r+1] \in \script{A}$. 
We assume each coding action only uses packets that are delivered as a result of that action, so that there
is no ``partial information''  that can be exploited on future frames.  
We further assume there are a finite (but arbitrarily large) number of coding actions in the set $\script{A}$, and that
there are positive numbers $T_{max}$ and $\mu_{max}$ such that $1 \leq T(\alpha) \leq T_{max}$ and
$0 \leq \mu(\alpha) \leq \mu_{max}$ for all $\alpha \in \script{A}$.  

Assume that frame $0$ starts at time $0$.  Define $t[0] = 0$, and for $r \in \{0, 1, 2, \ldots\}$ 
define $t[r]$ as the slot that starts frame $r$. 
 Let $\bv{Q}[r] = (Q_1[r], \ldots, Q_M[r])$ be the queue backlog vector at the beginning
of each frame $r \in \{0, 1, 2, \ldots\}$.  Then: 
\begin{equation} \label{eq:q-update} 
 Q_m[r+1] = \max[Q_m[r] - \mu_m(\alpha[r]), 0] + arrivals_m[r] 
 \end{equation} 
where $arrivals_m[r]$ is the number of type $m$ arrivals during frame $r$: 
\begin{equation} \label{eq:arrivals-m}
 arrivals_m[r] \defequiv \sum_{\tau=t[r]}^{t[r]+T(\alpha[r]) -1}A_m(\tau) 
 \end{equation} 
The $\max[\cdot, 0]$ operator in the queue update equation \eqref{eq:q-update} in principle allows actions $\alpha[r] \in \script{A}$
to be chosen independently of the queue backlog at the beginning of a frame.  In this case, if the action $\alpha[r]$ attempts
to deliver one or more packets from queues that are empty, \emph{null} packets are created and delivered.  In practice, these
null packets do not need to be delivered.

Our focus is on index coding problems with action sets $\script{A}$ defined by a 
specific set coding options, such as the set of all 
cyclic coding actions.  For example, an action $\alpha$ that is a 2-cyclic coding action that uses packets of  type $m$ and $k$ 
has $T(\alpha) = 1$ and $\bv{\mu}(\alpha)$ being a binary vector with 1s in entries $m$ and $k$ and zeros elsewhere. 
However, the above model is general and can also apply to other types of problems, 
such as multi-hop networks where actions $\alpha \in \script{A}$ represent some sequence of multi-hop network coding. 

\subsection{The Code-Constrained Capacity Region} 

We say that queue $Q_m[r]$ is \emph{rate stable} if: 
\[ \lim_{R\rightarrow\infty} \frac{Q_m[R]}{R} = 0 \: \: (\mbox{with probability 1}) \]
It is not difficult to show that $Q_m[R]$ is rate stable if and only if the arrival rate $\lambda_m$ is equal to the delivery
rate of type $m$ traffic \cite{sno-text}. The \emph{code-constrained capacity region} $\Lambda_{\script{A}}$ is the set of all rate 
vectors $(\lambda_1, \ldots, \lambda_M)$ for which there exists an algorithm for selecting $\alpha[r] \in \script{A}$ over
frames that makes all queues rate stable. 

\begin{thm} \label{thm:decomp} A rate vector $\bv{\lambda}$ is in the code-constrained capacity region $\Lambda_{\script{A}}$ if and only if
there exist  probabilities  $p(\alpha)$ such that $\sum_{\alpha\in\script{A}} p(\alpha) = 1$ and: 
\begin{eqnarray} \label{eq:cond1}  
\lambda_m \leq \frac{\sum_{\alpha\in\script{A}}p(\alpha)\mu_m(\alpha)}{\sum_{\alpha\in\script{A}} p(\alpha)T(\alpha)}  \: \: \: \: \forall m \in \{1, \ldots, M\}
\end{eqnarray} 
\end{thm}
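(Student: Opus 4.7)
The plan is to prove the two directions of the equivalence separately, in both cases using frame-based time-averaging and the strong law of large numbers (SLLN), exploiting the fact that $\script{A}$ is finite and both $T(\alpha)$ and $\mu_m(\alpha)$ are bounded.

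For the sufficiency (``if'') direction, I would exhibit a stationary randomized policy: at the start of every frame $r$, independently pick $\alpha[r]=\alpha$ with probability $p(\alpha)$. Then $\{T(\alpha[r])\}$ and $\{\mu_m(\alpha[r])\}$ are i.i.d.\ bounded sequences with means $\bar{T}=\sum_\alpha p(\alpha)T(\alpha)$ and $\bar\mu_m=\sum_\alpha p(\alpha)\mu_m(\alpha)$. The SLLN yields $t[R]/R\to\bar T$, and $(1/R)\sum_{r=0}^{R-1}\mu_m(\alpha[r])\to\bar\mu_m$ almost surely. Since $\bv A(t)$ is i.i.d.\ with mean $\bv\lambda$, applying the SLLN to the random-index sum gives $(1/R)\sum_{r=0}^{R-1} arrivals_m[r] \to \lambda_m\bar T$. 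Iterating \eqref{eq:q-update} and using the $\max[\cdot,0]$ structure then gives the standard telescoping estimate showing $Q_m[R]/R$ is asymptotically dominated by $\max\{\lambda_m\bar T-\bar\mu_m,0\}$, which equals zero by \eqref{eq:cond1}; hence all queues are rate stable and $\bv\lambda\in\Lambda_\script{A}$.

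For the necessity (``only if'') direction, suppose some (possibly non-stationary, queue-aware) algorithm achieves rate stability. For each $R$, let $N_\alpha[R]$ be the number of frames in $\{0,\ldots,R-1\}$ that used action $\alpha$, and set $p_R(\alpha)=N_\alpha[R]/R$. Each $p_R$ lies in the simplex over the finite set $\script A$, so by Bolzano--Weierstrass, along almost every sample path we can extract a subsequence $R_j\to\infty$ with $p_{R_j}(\alpha)\to p(\alpha)$, $p(\alpha)\ge 0$, $\sum_\alpha p(\alpha)=1$. From \eqref{eq:q-update}, using $\max[x,0]\ge x$ and iterating,
\begin{equation*}
Q_m[R]\ \ge\ Q_m[0]+\sum_{r=0}^{R-1} arrivals_m[r]-\sum_{r=0}^{R-1}\mu_m(\alpha[r]).
\end{equation*}
The second sum equals $R\sum_\alpha p_R(\alpha)\mu_m(\alpha)$, while the first sum equals $\sum_{\tau=0}^{t[R]-1}A_m(\tau)$, where $t[R]=R\sum_\alpha p_R(\alpha)T(\alpha)\le R\,T_{\max}$. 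Dividing by $R_j$, the SLLN for the arrival process evaluated at the (bounded-ratio) index $t[R_j]$ gives the arrival term limit $\lambda_m\sum_\alpha p(\alpha)T(\alpha)$, and the service term limit $\sum_\alpha p(\alpha)\mu_m(\alpha)$. Rate stability gives $Q_m[R_j]/R_j\to 0$, so the inequality reduces to $\lambda_m\sum_\alpha p(\alpha)T(\alpha)\le\sum_\alpha p(\alpha)\mu_m(\alpha)$, which is exactly \eqref{eq:cond1}. Note that $\sum_\alpha p(\alpha)T(\alpha)\ge 1>0$ since $T(\alpha)\ge 1$, so the division is legitimate.

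The step I expect to be the main obstacle is the careful probabilistic bookkeeping in the necessity direction, because the arrival sum is indexed by slots while the action counts are indexed by frames, and the frame sizes are chosen adaptively. One must justify that the SLLN applied to $\sum_{\tau=0}^{t[R_j]-1}A_m(\tau)$ with the adaptive random index $t[R_j]$ still yields the correct rate $\lambda_m\cdot t[R_j]/R_j$ in the limit; this follows because $t[R]/R$ stays in the bounded interval $[1,T_{\max}]$, so standard results on randomly-indexed sums of bounded i.i.d.\ variables apply. A secondary subtlety is that the limiting $p(\alpha)$ is sample-path dependent, but since the theorem only requires \emph{existence} of some $p$ satisfying \eqref{eq:cond1}, any sample-path limit suffices.
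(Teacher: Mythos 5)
Your proposal is correct and follows essentially the same route as the paper: sufficiency via the stationary randomized policy that picks $\alpha$ i.i.d.\ with distribution $p(\alpha)$ plus the law of large numbers, and necessity via the empirical action frequencies $N_\alpha[R]/R$, a convergent subsequence in the simplex over the finite set $\script{A}$, the lower bound $Q_m[R] \ge Q_m[0] + \sum_{r} arrivals_m[r] - \sum_{r}\mu_m(\alpha[r])$, and rate stability along that subsequence. The only minor difference is in the arrival term of the necessity step: you write it as a slot-indexed SLLN limit scaled by $t[R_j]/R_j \in [1,T_{max}]$, whereas the paper's Appendix B decomposes the arrivals over the frame classes $\script{F}(\alpha,R)$ and applies the law of large numbers within each class with mean $T(\alpha)\lambda_m$; both resolve the same bookkeeping subtlety and both are valid.
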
 

\begin{proof} 
The proof that such probabilities $p(\alpha)$ necessarily exist whenever $\bv{\lambda} \in \Lambda_{\script{A}}$ is 
given in Appendix B.  Below we prove sufficiency.  Suppose such probabilities $p(\alpha)$ exist that satisfy \eqref{eq:cond1}. 
We want to show that $\bv{\lambda} \in \Lambda_{\script{A}}$.  To do so, we design an algorithm that makes all queues $Q_m[r]$ in \eqref{eq:q-update}
rate stable.  
By rate stability theory in \cite{sno-text}, it suffices to design an algorithm that has a frame average arrival rate to each queue 
$Q_m[r]$
that is  less than or equal to the frame average service rate (both in units of packets/frame). 

Consider the algorithm that, every frame $r$, independently chooses action $\alpha \in \script{A}$ with probability $p(\alpha)$. 
Let $\alpha^*[r]$ represent this random action chosen on frame $r$. 
Then $\{T(\alpha^*[r])\}_{r=0}^{\infty}$ is an i.i.d. sequence, as is $\{\mu_m(\alpha^*[r])\}_{r=0}^{\infty}$ for each $m \in \{1, \ldots, M\}$. 
By the law of large numbers, 
the frame average arrival rate $\overline{arrivals}_m$ and the frame average service $\overline{\mu}_m$
 (both in packets/frame)  are equal to the following with probability 1: 
\begin{eqnarray*}
\overline{\mu}_m &=& \expect{\mu_m(\alpha^*[r])} =  \mbox{$\sum_{\alpha\in\script{A}}$} p(\alpha) \mu_m(\alpha) \\
\overline{arrivals}_m &=& \lambda_m\expect{T(\alpha^*[r])} = \lambda_m\mbox{$\sum_{\alpha\in\script{A}}$} p(\alpha)T(\alpha) 
\end{eqnarray*}
We thus have for each $m \in \{1, \ldots, M\}$: 
\begin{eqnarray*}
\frac{\overline{arrivals}_m}{\overline{\mu}_m} = \frac{\lambda_m\sum_{\alpha\in\script{A}} p(\alpha)T(\alpha)}{\sum_{\alpha\in\script{A}} p(\alpha)\mu_m(\alpha)} \leq 1
\end{eqnarray*}
where the final inequality follows by \eqref{eq:cond1}.  
\end{proof}

\subsection{Max-Weight Queueing Protocols} 

Theorem \ref{thm:decomp} shows that all traffic can be supported by a \emph{stationary and randomized} algorithm 
that independently chooses actions $\alpha^*[r] \in \script{A}$ with probability distribution $p(\alpha)$. This does not
require knowledge of the queue backlogs.  However, computing probabilities $p(\alpha)$ that satisfy \eqref{eq:cond1} 
would require knowledge of the arrival rates $\lambda_m$, and is a difficult computational task even if these rates are known. 
We provide two \emph{dynamic} algorithms that use queue backlog information.  These can also be viewed as online computation
algorithms for computing probabilities $p(\alpha)$.  Both are similar in spirit to the max-weight approach to dynamic scheduling 
in \cite{tass-server-allocation},
but the variable frame lengths require a non-trivial extended analysis. 
  Our  first algorithm assumes knowledge of the arrival rates $\lambda_m$.
    
  {\bf{Max-Weight Code Selection Algorithm 1 (Known $\bv{\lambda}$):}}  At the beginning of each frame $r$, observe the queue backlogs
  $Q_m[r]$  and perform the following: 
  \begin{itemize} 
  \item Choose code action $\alpha[r] \in \script{A}$ as the maximizer of: 
  \begin{equation} \label{eq:mw-alg1} 
   \sum_{m=1}^M Q_{m}[r][\mu_{m}(\alpha[r]) - \lambda_{m}T(\alpha[r])] 
   \end{equation} 
  where ties are broken arbitrarily. 
  \item Update the queue equation via \eqref{eq:q-update}.  
  \end{itemize} 
  
  The next algorithm uses a ratio rule, and does not require knowledge of the rates $\lambda_m$: 
  
    {\bf{Max-Weight Code Selection Algorithm 2 (Unknown $\bv{\lambda}$):}}  At the beginning of each frame $r$, observe the queue backlogs
    $Q_m[r]$  and perform the following: 
  \begin{itemize} 
  \item Choose code action $\alpha[r] \in \script{A}$ as the maximizer of: 
  \begin{equation} \label{eq:mw-alg2} 
   \sum_{m=1}^MQ_{m}[r]\left[\frac{\mu_{m}(\alpha[r])}{T(\alpha[r])}\right]
   \end{equation} 
  where ties are broken arbitrarily. 
  \item Update the queue equation via \eqref{eq:q-update}.  
  \end{itemize}

   \begin{thm} \label{thm:maxweight} Suppose that $\bv{\lambda} \in \Lambda_{\script{A}}$. Then all queues are
  rate stable under either of the two algorithms above.  
  \end{thm}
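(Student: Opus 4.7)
The plan is a quadratic Lyapunov drift argument tailored to the variable frame length setting. Define $L[r] = \tfrac{1}{2}\sum_m Q_m[r]^2$ and the per-frame drift $\Delta[r] = L[r+1]-L[r]$. Squaring the update (\ref{eq:q-update}) and using the elementary bound $(\max[a-b,0]+c)^2 \leq a^2+b^2+c^2-2a(b-c)$ (valid for $b,c\geq 0$) gives the sample-path inequality
\[ \Delta[r] \leq B_0 + \sum_{m=1}^M Q_m[r]\bigl(arrivals_m[r] - \mu_m(\alpha[r])\bigr), \]
where $B_0$ is a finite constant absorbing the second-moment terms since $\mu_m(\alpha)\leq \mu_{max}$ and $arrivals_m[r]\leq T(\alpha[r])\leq T_{max}$. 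Because $\alpha[r]$ is a deterministic function of $\bv{Q}[r]$ and the $A_m(\tau)$ are i.i.d.\ with mean $\lambda_m$, conditioning on $\bv{Q}[r]$ substitutes $\lambda_m T(\alpha[r])$ for $arrivals_m[r]$, yielding
\[ \expect{\Delta[r]\mid\bv{Q}[r]} \leq B_0 - \sum_m Q_m[r]\bigl(\mu_m(\alpha[r])-\lambda_m T(\alpha[r])\bigr). \]
Now I would invoke Theorem \ref{thm:decomp} to obtain probabilities $p(\alpha)$ satisfying (\ref{eq:cond1}); multiplying (\ref{eq:cond1}) by $Q_m[r]\sum_{\alpha}p(\alpha)T(\alpha)$ and summing over $m$ yields the key inequality $\sum_{\alpha}p(\alpha)\sum_m Q_m[r](\mu_m(\alpha)-\lambda_m T(\alpha))\geq 0$. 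For Algorithm 1, this says the objective in (\ref{eq:mw-alg1}) averaged under $p$ is nonnegative, so its maximizer $\alpha[r]$ also achieves a nonnegative value, giving $\expect{\Delta[r]\mid\bv{Q}[r]}\leq B_0$.

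The main obstacle is Algorithm 2, whose objective (\ref{eq:mw-alg2}) is blind to $\bv{\lambda}$. My plan is a reweighting trick: with $\bar T = \sum_{\alpha}p(\alpha)T(\alpha)$, the weights $\tilde p(\alpha) = p(\alpha)T(\alpha)/\bar T$ form a probability distribution on $\script{A}$, and dividing the key inequality by $\bar T$ rewrites it as
\[ \sum_{\alpha}\tilde p(\alpha) \sum_m Q_m[r]\frac{\mu_m(\alpha)}{T(\alpha)} \geq \sum_m Q_m[r]\lambda_m. \]
Hence the maximum of $\sum_m Q_m[r]\mu_m(\alpha)/T(\alpha)$ over $\alpha\in\script{A}$ is at least $\sum_m Q_m[r]\lambda_m$, and multiplying the Algorithm 2 maximizer inequality by the positive integer $T(\alpha[r])$ recovers $\sum_m Q_m[r](\mu_m(\alpha[r])-\lambda_m T(\alpha[r]))\geq 0$, so again $\expect{\Delta[r]\mid\bv{Q}[r]}\leq B_0$.

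With the uniform drift bound in hand for either algorithm, I would telescope to get $\expect{L[R]}\leq L[0]+B_0 R$, so $\expect{Q_m[R]^2}=O(R)$; Jensen's inequality then gives $\expect{Q_m[R]}/R = O(1/\sqrt{R})\to 0$, establishing mean rate stability. The almost-sure claim $\lim_{R\to\infty}Q_m[R]/R=0$ follows from this together with the bounded per-frame increments $|Q_m[r+1]-Q_m[r]| \leq \max(\mu_{max},T_{max})$ via the standard rate stability criterion in \cite{sno-text}.
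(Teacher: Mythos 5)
Your proposal is correct, and it follows the paper's overall architecture: the quadratic Lyapunov function, the drift bound with $arrivals_m[r]$ replaced by $\lambda_m T(\alpha[r])$ after conditioning (the paper's Lemma \ref{lem:causal-drift}), comparison against the stationary randomized policy guaranteed by Theorem \ref{thm:decomp}, telescoping to get linear growth of $\expect{Q_m[R]^2}$, and the bounded-increment/linear-second-moment rate-stability criterion (the paper's Lemma \ref{lem:suff}, cited from the same line of work). Where you genuinely diverge is the treatment of Algorithm 2. The paper argues that maximizing the ratio of conditional expectations $\expect{\sum_m Q_m \mu_m(\alpha[r])\mid \bv{Q}[r]}/\expect{T(\alpha[r])\mid\bv{Q}[r]}$ over randomized policies is achieved by a pure policy (invoking Chapter 7 of \cite{sno-text}), and only then plugs in the randomized policy $p(\alpha)$; your reweighting $\tilde p(\alpha) = p(\alpha)T(\alpha)/\bar T$ instead converts \eqref{eq:cond1} directly into the statement that some $\alpha\in\script{A}$ has $\sum_m Q_m[r]\mu_m(\alpha)/T(\alpha) \geq \sum_m Q_m[r]\lambda_m$, so the deterministic maximizer in \eqref{eq:mw-alg2} satisfies the key inequality pointwise and no pure-versus-randomized optimality result is needed. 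This is more elementary and self-contained; the paper's route buys a reusable general principle that it also exploits in the $O(1/(1-\rho))$ backlog bound of Appendix D. Your Algorithm 1 step (the $p$-average of the weights in \eqref{eq:mw-alg1} is nonnegative, hence so is the max) is likewise a slightly more direct, deterministic version of the paper's expectation comparison. Two small polish points: the conditioning step implicitly conditions on the frame start time as well (the paper handles this with iterated expectations in a footnote), and the almost-sure conclusion should be attributed to the second-moment growth bound plus bounded increments rather than to mean rate stability per se—both of which you already have in hand.
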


  \begin{proof} 
See Appendix C. 
  \end{proof} 
  
  It can further be shown that if there is a value $\rho$ such that $0 \leq \rho < 1$, and if
   $\bv{\lambda} \in \rho\Lambda_{\script{A}}$, being a $\rho$-scaled version of $\Lambda_{\script{A}}$, then 
   both algorithms give average queue size $O(1/(1-\rho))$.  Thus, the average backlog bound  
   increases to infinity as the arrival rates are pushed closer to the boundary of the capacity region. 
This is proven in Appendix D. 
  
  Define $\tilde{\script{A}}$ as the action space that restricts to direct transmissions, 
  2-cycle code actions, 3-cycle code actions, and the 1-slot $A+B+C$ code action
  that exploits double cycles, as described in Section \ref{section:cyclic-codes}. 
  Algorithm 2 has a particularly simple implementation on action space $\tilde{\script{A}}$
  and when each packet has at most one destination. 
  Indeed, we note that cycles can be defined purely on the user set $\script{N}$, and any candidate
  cycle that involves a user-to-user part $i\rightarrow j$ should use a packet of commodity $m \in\{1, \ldots, M\}$
  that maximizes $Q_m[r]$ over all commodities $m$ that consist of packets intended for user $j$ and
  contained as side information at user $i$.  
  %This provides a principled method for making optimal 
  %decisions.  It is important to emphasize that the naive approach that grabs any available 
  %$A+B+C$ action, then any available 2-cycle if no such action is available,
  %then any available 3-cycle if neither of the above are available, and so on, will not necessarily work. 
  %Counterexamples can easily be given (omitted for brevity). 

 \subsection{Example Simulation for 3 Users} 
 
 \begin{figure}[htbp]
    \centering
    \includegraphics[height=2in, width=3.5in]{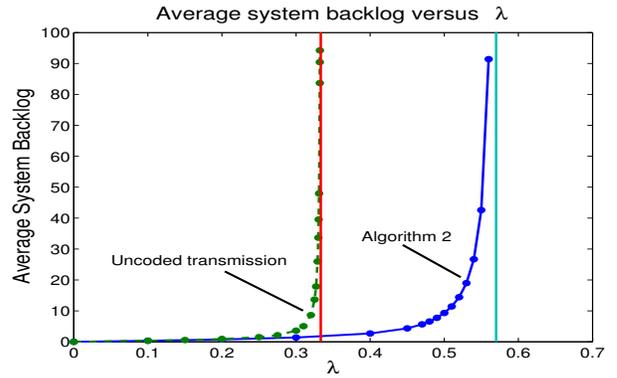} % requires the graphicx package
    \caption{Simulation of dynamic index coding for a 3 user system.}
    \label{fig:sim}
 \end{figure}
 
 Fig. \ref{fig:sim} presents simulation results for a system with  $N=3$ users, with action space $\tilde{\script{A}}$ as defined
 above. 
 We consider only algorithm 2, which does not require knowledge of rates $\lambda_m$, and 
 compare against uncoded transmissions. 
 All packets are intended for at most
 one user.  
 Packets intended for user $n \in \{1, 2, 3\}$ arrive as independent Bernoulli processes with identical 
 rates $\lambda$.  We assume each packet is independently in the cache of the other two users with probability 
 1/2.  Thus, there are four types of packets intended
 for user 1: Packets not contained as side information anywhere, packets contained as side information at user 2 only, 
 packets contained as side information at user 3 only, and packets contained as side information at both users 2 and 3. 
 Users 2 and 3 similarly have 4 traffic types, for a total of $M=12$ traffic types.  
%The max-weight algorithm 2 is easily implemented by noticing that $k$-cycle transmissions can be defined on the user
%space, and a $K$-cycle that involves a user-to-user link $(i,j)$ is optimized by using the commodity $m \in \{1, \ldots, M\}$ 
%that maximizes $Q_m[r]/K$  over all commodities $m$
%that consist of packets intended for user $j$ and contained as side information at user $i$.

Each data point in Fig. \ref{fig:sim} represents a simulation over 5 million frames at a given value of $\lambda$.  
The figure plots the resulting total average number of packets in the system (summed over all 12 queues). 
The case of direct (uncoded) transmission is also shown.  Uncoded transmission can support a 
maximum rate of $\lambda = 1/3$  (for a total traffic rate of 1). 
It is seen that algorithm 2 can significantly outperform uncoded transmission, achieving stability 
at rates up to $\lambda = 0.57$ (for a total traffic rate of $1.71$). 
%Remarkably, Algorithm 2 is competitive with Algorithm 1, even though it does not use traffic arrival rates $\lambda_m$.

\section{Broadcast Relay Networks} \label{section:broadcast}

\begin{figure}[t]
   \centering
   \includegraphics[height=1in, width=3in]{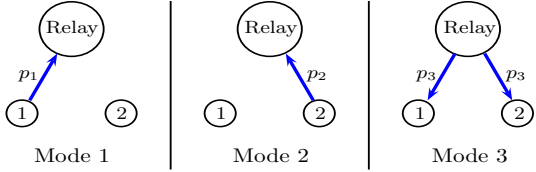} % requires the graphicx package
   \caption{An illustration of a 2-user broadcast relay system, with the 3 possible transmission modes shown. In mode 3, 
   packet $p_3$ is received at both users.}
   \label{fig:1}
\end{figure}

Consider now the following related problem:  There are again $N$ users and a single broadcast station. 
However, the broadcast station initially has no information, and acts as a relay to transfer independent
unicast 
data between the users.  Further, the users only know their own data, and initially have no knowledge of data sourced at other
users.  Time is again slotted, and every slot we can choose
from one of $N+1$ modes of transmission.  The first $N$ transmission modes involve
an error-free packet transmission from a single user to the relay.  
%That is, the first transmission mode has user 1 send an error-free
%packet to the relay, the second has user 2 send an error-free packet to the relay, and so on. 
The $(N+1)$th transmission mode is where the relay broadcasts 
a single packet that is received error-free at each of the $N$ users. 
Fig. \ref{fig:1} illustrates
an example system with 2 users, where the 3 possible transmission modes are shown. 
For simplicity, we assume the user transmissions cannot be overheard by other users, and the 
users first send all packets to the relay.  The relay then can make coding decisions for its 
downlink transmissions. 

%Formally, we can define a \emph{block code} in terms of the static clearance time problem:  Assume we are given a fixed batch 
%of packets to clear.  We first choose uplink/downlink transmission time slots (independently of the data), and then we choose
%what messages are sent in these slots.   

\subsection{The Minimum Clearance Time Relay Problem} 

First consider a static problem where a batch of packets must be delivered in minimum time. 
Let $P_{ij}$ represent the number of packets that user $i$ wants to send to user $j$, where
$i, j \in \{1, \ldots, N\}$. All packets are independent, and the total number of packets is $P$, where:
\[ P = \mbox{$\sum_{i=1}^N\sum_{j=1}^N$} P_{ij} \]
This problem is related to the index coding problem as follows:   Suppose on the first $P$ slots, all users
send their packets to the relay on the uplink channels.  It remains for the relay to send all users the desired data, 
and these users have side information.  The resulting side information graph $\script{G}$ is the same as in the general 
index coding problem.  However, it has the following \emph{special structure}: The only user that has side information about a packet
is the source user of the packet.  Specifically: 
\begin{itemize} 
\item Each packet is contained as side information in exactly one user.  Thus, 
each packet node of $\script{G}$ has a single incoming link from some user that is its source.
\item Each packet has exactly one user as its destination.  
Thus, each packet node of $\script{G}$ has a single outgoing link to some user that is its destination.
\end{itemize} 

This special structure leads to a simplified graphical model for demands, which we call the \emph{weighted compressed
graph} $\WCG$ of $\script{G}$. The graph $\WCG$ is formed from $\script{G}$ as follows:  It is a directed graph defined
on the user nodes $\script{N}$ only, and contains a link $(a,b)$ if and only if the original graph $\script{G}$ specifies that
user node $a$ has a packet that user node $b$ wants.  Further, each link $(a,b)$ is given a positive integer weight $P_{ab}$, 
the number of packets user $a$ wants to send to user $b$.  It is easy to show that $\WCG$ is acyclic if and only if $\script{G}$ is 
acyclic (see Appendix E).  Hence, coding can only help if $\WCG$ contains cycles, and so $T_{min}(\script{G}) = P$ whenever $\WCG$ is
acyclic.  
%Further, a proof similar to Theorem \ref{thm:acyclic} can be done to show the (not surprising) result that the
%minimum clearance time $T_{min}$ for this problem (which includes all uplink transmissions) is equal to 
%$2P$ whenever $\script{G}$ is acyclic (so that sending all packets on the uplink channels at the beginning does not loose optimality). 

We say the weighted compressed graph $\WCG$ has \emph{disjoint cycles} if each link participates in at most one simple
cycle. An example is shown in Fig. \ref{fig:disjoint-cycles}.   Consider such a graph that has $C$ disjoint cycles.  Let $w^{min}_c$ be 
the min-weight link on each disjoint cycle $c \in \{1, \ldots, C\}$.  

\begin{thm} \label{thm:disjoint-cycles} If the broadcast relay problem has a weighted compressed graph $\WCG$ with disjoint cycles, then: 
\begin{equation} \label{eq:disjoint-cycles} 
 T_{min}(\script{G}) = P - \mbox{$\sum_{c=1}^Cw^{min}_c$} 
 \end{equation} 
and so the full clearance time (including the $P$ uplink transmissions) is the above number plus $P$. 
Further, optimality can be achieved over the class of cyclic coding actions, as described in Section \ref{section:cyclic-codes}. 
\end{thm}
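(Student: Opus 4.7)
The plan is to prove the claim via matching upper and lower bounds on $T_{min}(\script{G})$, using the cyclic coding actions of Section \ref{section:cyclic-codes} for achievability and the acyclic subgraph argument of the preceding subsection for the converse.

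For the upper bound, I will schedule, for each disjoint cycle $c$ of length $K_c$ in $\WCG$, exactly $w^{min}_c$ consecutive $K_c$-cycle coding actions, each action using one fresh packet from every edge of $c$. By Section \ref{section:cyclic-codes} each such action delivers $K_c$ packets in $K_c - 1$ broadcast slots, so these operations deliver $K_c w^{min}_c$ packets on cycle $c$ in $(K_c - 1)w^{min}_c$ slots and exhaust the min-weight edge. Since the cycles in $\WCG$ are edge-disjoint by hypothesis, depleting one edge per cycle destroys every simple cycle of $\WCG$; by the correspondence between cycles of $\script{G}$ and $\WCG$ (Appendix E) the residual demand graph is acyclic, so the remaining $P - \sum_c K_c w^{min}_c$ packets can be cleared one-by-one by direct transmissions. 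The total downlink count is
\[
\sum_{c=1}^C (K_c-1)w^{min}_c + \Bigl(P - \sum_{c=1}^C K_c w^{min}_c\Bigr) = P - \sum_{c=1}^C w^{min}_c,
\]
which gives the desired upper bound and shows that it is attained using only cyclic coding actions.

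For the lower bound, I apply the pruning lemma together with Theorem \ref{thm:acyclic}. I form a subgraph $\script{G}'$ from $\script{G}$ by removing the $w^{min}_c$ packet nodes associated with the minimum-weight edge of each cycle $c$, totaling $\sum_c w^{min}_c$ deleted packets. By the same edge-disjointness argument as in the achievability step, $\WCG$ becomes acyclic after the corresponding edges vanish, so by Appendix E the pruned $\script{G}'$ is acyclic with $P - \sum_c w^{min}_c$ packets. Theorem \ref{thm:acyclic} then gives $T_{min}(\script{G}') = P - \sum_c w^{min}_c$, and the pruning lemma yields $T_{min}(\script{G}) \geq T_{min}(\script{G}')$. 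Combining with the upper bound yields equality \eqref{eq:disjoint-cycles}, and adding the $P$ uplink slots gives the full clearance time.

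The main subtlety lies in the acyclicity step: the hypothesis that each link participates in at most one simple cycle must be used to conclude that deleting one edge from each of the $C$ known cycles truly eliminates every simple cycle of the graph, not merely the enumerated ones. This is precisely why the ``disjoint cycles'' assumption is needed; without it, pruning one edge per enumerated cycle could leave other simple cycles intact, so neither the lower bound construction nor the achievability scheduling would go through as stated, and a different edge-selection rule—or an entirely different argument—would be required.
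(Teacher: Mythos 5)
Your proof is correct and follows essentially the same route as the paper's: the lower bound by pruning the min-weight link of each disjoint cycle and invoking Theorem \ref{thm:acyclic} on the resulting acyclic subgraph, and the matching upper bound by performing $w^{min}_c$ cycle-coding actions per cycle followed by uncoded transmissions, with the slot count worked out explicitly. One small remark: acyclicity after deleting one edge from every simple cycle holds simply because edge deletion creates no new cycles; the disjointness hypothesis is really what guarantees the deleted min-weight edges (and the packets consumed by the coding actions on different cycles) are distinct, so the savings add up to exactly $\sum_{c=1}^C w^{min}_c$.
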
 

As an example, the graph $\WCG$ in Fig. \ref{fig:disjoint-cycles}a has $P= 48$, three disjoint
cycles with $w^{min}_1 = 4$, $w^{min}_2 = 4$, $w^{min}_3 = 1$, and so $T_{min}(\script{G}) = 48 - 4 - 4 - 1 = 39$. 

\begin{figure}[htbp]
   \centering
   \includegraphics[height=1.1in, width=3.25in]{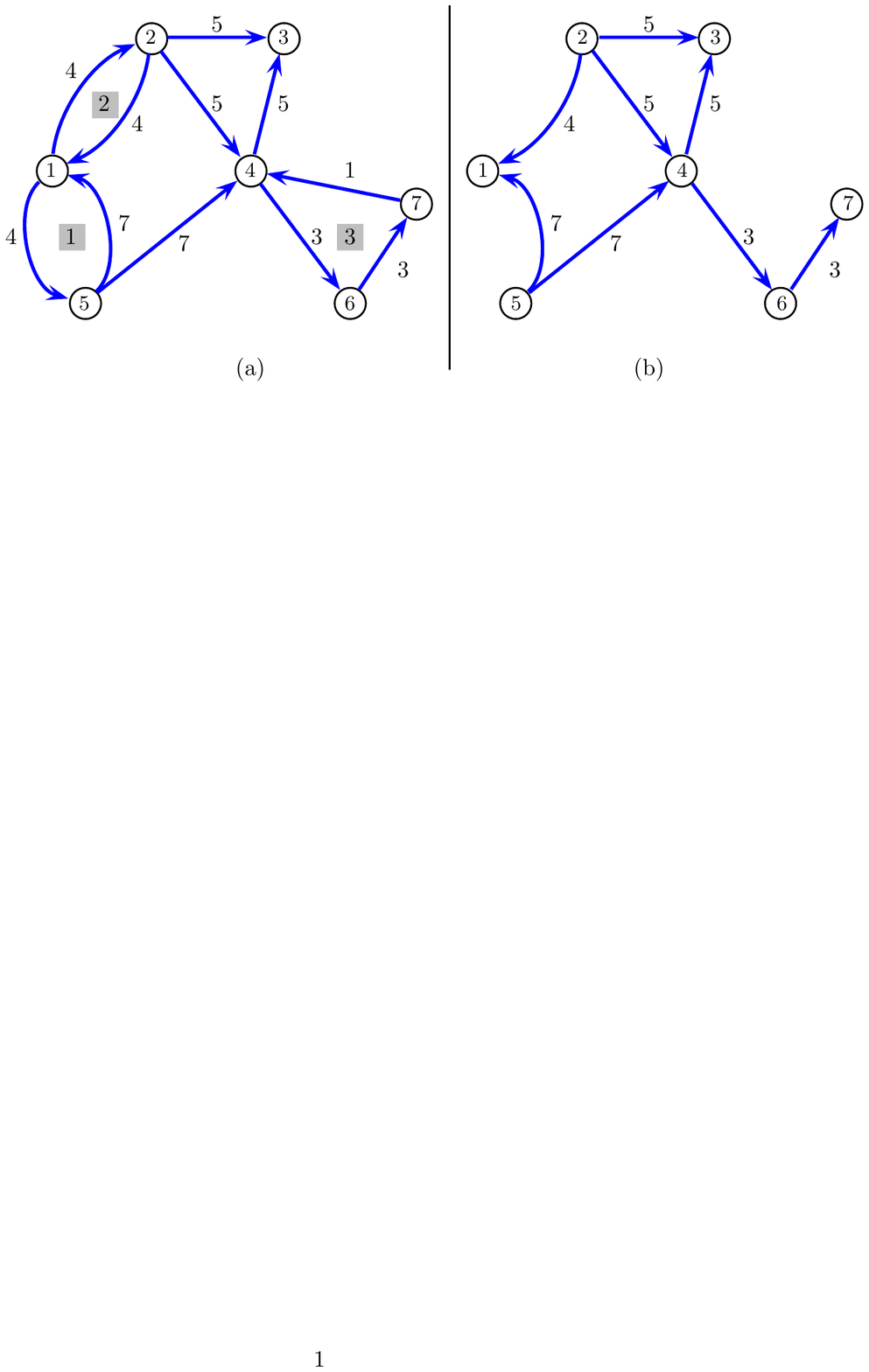} % requires the graphicx package
   \caption{(a) A graph $\WCG$ with three disjoint cycles, and (b) its pruned graph $\script{W}\script{C}(\script{G}')$.}
   \label{fig:disjoint-cycles}
\end{figure}

\begin{proof} 
%For brevity, we present only an abbreviated argument to prove \eqref{eq:disjoint-cycles}. 
First prune the graph 
$\WCG$ by removing the min-weight link on each of the disjoint cycles (breaking ties arbitrarily).  This corresponds to removing those packets 
from the original graph $\script{G}$, to produce a new graph $\script{G}'$ with exactly $P - \sum_{c=1}^Cw_c^{min}$ packets. 
The weighted compressed graph $\script{W}\script{C}(\script{G}')$ is the subgraph of $\WCG$ with the min-weight links
on each disjoint cycle removed (see Fig. \ref{fig:disjoint-cycles}a and Fig. \ref{fig:disjoint-cycles}b). Both $\script{G}'$ and $\script{W}\script{C}(\script{G}')$ are acyclic, and so: 
\[ T_{min}(\script{G}) \geq T_{min}(\script{G}') = P - \mbox{$\sum_{c=1}^Cw_c^{min}$}  \]
It remains only to construct a coding algorithm that achieves this lower bound.  This can be done easily by using $w_c^{min}$ separate 
cyclic coding
actions for each of the disjoint cycles (using a $k$-cycle coding action for any cycle of length $k$), 
and then directly transmitting the remaining packets. 
\end{proof} 

\subsection{Traffic Structure and Optimality of Cyclic Coding}

Suppose we have a broadcast relay problem with $N$ users, packet matrix $(P_{ij})$, 
and with the following additional structure:  Each user $i \in \{1, \ldots, N\}$ wants to send data to only one other user.
That is, the matrix $(P_{ij})$ has at most one non-zero entry in each row $i \in \{1, \ldots, N\}$.   
We now show that the resulting graph $\WCG$ has disjoint cycles.  To see this, suppose it is not true, so that there are two overlapping cycles. Then there must be a shared
   link $(a,b)$ that continues to a link $(b, k)$ for cycle 1 and $(b, m)$ for cycle 2, where $k \neq m$.  This 
   means node $b$ has two outgoing links,  a contradiction because matrix $(P_{ij})$ has at most one non-zero entry in 
   row $b$,  and hence at most one outgoing link from node $b$.    
   We conclude that $\WCG$ has disjoint cycles, and so cyclic coding is optimal via Theorem \ref{thm:disjoint-cycles}. 
   
  A similar argument holds if each user 
  wants to \emph{receive} from at most one other user, so that $(P_{ij})$ has at most one non-zero entry 
  in every column.   Again, $\WCG$ has disjoint cycles, and so cyclic coding is optimal.

\subsection{Dynamic Broadcast Relay Scheduling} 

Now consider the dynamic case where packets from source user $i$ and destination user $j$ arrive with rate $\lambda_{ij}$
packets/slot. Suppose we have an abstract set of coding actions $\script{A}$, where
each action involves a subset of packets, and first transmits these packets to the relay before any coding at the relay.  
Let $T(\alpha)$ be the 
number of slots to complete the action, and $(\mu_{ij}(\alpha))$ be the matrix of packets delivered by the action. 
%For example, if $\alpha$ is a direct transmission between $i$ and $j$, we have $T(\alpha) = 2$ (one uplink and one downlink
%transmission), and $(\mu_{ij}(\alpha))$ is a binary matrix with all zeros except for a single 1 in entry $(i,j)$. 
%If $\alpha$ is a 2-cycle coding action between nodes $i$ and $j$, we have $T(\alpha) = 3$ (two uplink transmissions, and 
%one XORed packet downlink transmission), and $(\mu_{ij}(\alpha))$ is an all zero matrix except for an entry of 1 in $\mu_{ij}$
%and $\mu_{ji}$.  
It can be shown that capacity can be approached arbitrarily closely by repetitions of minimum-clearance time scheduling on 
large blocks of the incoming data (similar to the capacity treatment in \cite{nonlinear-index-coding} for a limit of large packet size).  Hence, if $T_{min}(\script{G})$ can be optimally solved using only cyclic-coding actions, 
then capacity is also achieved in the max-weight algorithms when $\script{A}$ is restricted to cyclic-coding actions. 
It follows that such actions are optimal for rate matrices $(\lambda_{ij})$ with at most one non-zero entry per row, and for 
 rate matrices $(\lambda_{ij})$ with at most one non-zero entry per column. 

\subsection{Counterexamples} 

$\vspace{-.3in}$

\begin{figure}[htbp]
   \centering
   \includegraphics[height=1.1in, width=3.4in]{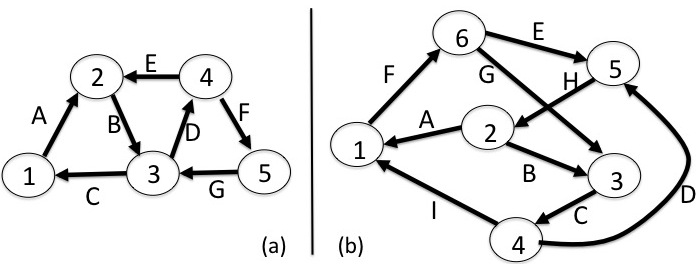} % requires the graphicx package
   \caption{Two example graphs $\WCG$ for broadcast relay problems.}
   \label{fig:counter-example}
\end{figure}

Can we minimize clearance time by grabbing any 
available 2-cycle, then any available 3-cycle if no 2-cycle
is available, and so on?  Not necessarily.  A simple counterexample is shown in Fig. \ref{fig:counter-example}a. 
%
%One may wonder if we can minimize clearance time by grabbing any available 2-cycle, then any available 3-cycle if no 2-cycle
%is available, and so on.  This does not work.  A simple counterexample is shown in Fig. \ref{fig:counter-example}a. 
The graph has 
5 users and 7 packets $\{A, \ldots, G\}$, where each link has a single packet.  Using the 
middle 3-cycle $2\rightarrow 3 \rightarrow 4\rightarrow 2$ by transmitting $B + D$ and $D + E$ leaves a remaining acyclic graph with 4 packets, and hence would take
4 more transmissions, for a total of 6 slots.  However, using the two side cycles (with 2 transmissions each) and then
transmitting the remaining packet $E$ clears everything in 5 slots, which is optimal because the maximum acyclic
subgraph has 5 packets (just remove links $B$ and $D$). 

%The example graph in Fig. \ref{fig:countere-example}a does not have disjoint cycles, yet is still optimally cleared with 
%cyclic coding.  
One may wonder if all broadcast relay graphs can be optimally cleared with cyclic coding.  We can
show this is 
true for $N=2$ and $N=3$ (see Appendix E and F).  However, this is not true in general for $N>3$. Fig. \ref{fig:counter-example}b 
shows a counterexample with $N=6$. Suppose each link has a single packet, 
so that we have 9 packets $\{A, \ldots, I\}$.  It can be shown that the maximum acyclic subgraph
has 7 packets, and so $T_{min}(\script{G}) \geq 7$, but the best cyclic coding method uses 8 slots.  
Here is a way to achieve 7 slots:  Send messages
$M_1 = E + G + F$, $M_2 = H + E$, $M_3 = H + D$, $M_4 = A + B + H$, $M_5 = C + B$, $M_6 = C + G$, $M_7 = C + I + D$. 
The decodings at users $2, 3, 4, 5, 6$ are straightforward by combining their side information with just a single message.  
The decoding at user 1 is done as follows:  $M_1 + M_2 + M_3 + M_6 + M_7 = F + I$. 
Since user 1 knows $F$, it can decode $I$.  $M_3 + M_4 + M_5+M_7 = A + I$, since user 1 knows $I$ it can get $A$. 

\section{Conclusions} 

This work presents a dynamic approach to index coding.  This problem is important for future wireless communication, 
where there may be many instances of side information that can be exploited. While optimal index coding for general 
problems seems
to be intractable, we develop a code-constrained capacity region, 
which restricts actions to a pre-specified set of codes.  Two max-weight algorithms were developed that can support
randomly arriving traffic whenever the arrival rate vector is inside the code-constrained capacity region.  
The first algorithm requires knowledge of the rate vector, and the second does not.   
Simulations 
verify network stability up to the boundary of the code-constrained capacity region, and illustrate 
improvements in both throughput and delay over uncoded transmission. 
For coding to provide gains in comparison to 
direct transmission, it must exploit cycles in the demand graph.  A simple set of codes based on cycles was
considered and shown to be optimal (so that the code constrained capacity region is equal to the unconstrained
capacity region) for certain classes of broadcast relay networks.  
These results add to the theory of information networks, and 
can be used to improve efficiency
in wireless communication systems.

\section*{Appendix A --- Proof of Theorem \ref{thm:acyclic}} 

 \begin{proof} (Theorem \ref{thm:acyclic}) 
 We already know that $T_{min}(\script{G}) \leq P$.  It suffices to show that $T_{min}(\script{G}) \geq P$. 
 Consider any mission-completing coding action that takes $T$ slots.  We show that $T \geq P$. 
 Let $\script{M}$ be the sequence of messages transmitted.  Then every node $n \in \script{N}$ 
 is able to decode its desired packets, being packets in the set $\script{R}_n$, 
 from the information $\{\script{H}_n, \script{M}\}$, being the 
 information it has at the end of the coding action.  That is, we have: 
 \begin{equation} \label{eq:decode-equiv-app}
  \{\script{H}_n, \script{M}\} \iff \{\script{H}_n, \script{M}, \script{R}_n\} \: \: \forall n \in\{1, \ldots, N\} 
 \end{equation} 
 
 Because the graph is acyclic, there must be at least one node with no outgoing links (by Fact \ref{fact:leaf}).  Choose
 such a node, and label this node $n_1$.   The node $n_1$ 
 cannot be a packet node, because we have assumed that all packet nodes have outgoing links.  Thus, $n_1 \in \script{N}$. 
 Because node $n_1$ has no outgoing links, it has $\script{H}_{n_1} = \phi$ and thus has
 no initial side information about any of the packets.  Thus, 
 it is able to decode all packets in the set $\script{R}_{n_1}$ by the messages $\script{M}$ alone.  That is: 
 \begin{equation} \label{eq:base} 
  \script{M} \iff \{\script{M}, \script{R}_{n_1}\} 
 \end{equation} 
 We want to show that this node $n_1$ can decode \emph{all} packets in the set $\script{P}$, so that: 
 \begin{equation} \label{eq:claim} 
 \script{M} \iff \{\script{M}, \script{P} \} 
 \end{equation} 
  If we can show that \eqref{eq:claim} holds, then the sequence of messages $\script{M}$ is also sufficient to deliver 
  $P$ independent packets to 
  node $n_1$, and node $n_1$ did not have any initial side information about these packets. 
  Thus, the number of slots $T$ used in the coding action must be at least $P$ by Fact \ref{fact:onelink}, proving the 
  result.  Thus, it suffices to prove \eqref{eq:claim}. 
  
  We prove \eqref{eq:claim} by induction on $k$, for $k \in \{1, \ldots, N-1\}$:  Assume that there is a
labeling of $k$ distinct user nodes $\{n_1, n_2, \ldots, n_k\}$ such that: 
\begin{equation} \label{eq:induction} 
 \{\script{M} \} \iff \{\script{M}, \script{R}_{n_1}, \ldots, \script{R}_{n_k}\} 
 \end{equation} 
This property holds for the base case $k=1$ by \eqref{eq:base}.  We now assume that \eqref{eq:induction} holds
for a general $k \in \{1, \ldots, N-1\}$, and prove it must also hold for $k+1$.  Take the graph $\script{G}$, and delete
the user nodes $\{n_1, \ldots, n_k\}$, also deleting all links outgoing from and incoming to these nodes.  This may create
packet nodes with no outgoing links:  Delete all such packet nodes.  Note that all deleted packet nodes (if any) 
must be in the set $\{\script{R}_{n_1}, \ldots, \script{R}_{n_k}\}$, being the set of packets desired by the users that are deleted.  
The resulting
subgraph must still be acyclic, and hence it must have a node $n_{k+1}$ with no outgoing links. This node must
be a user node, as we have deleted all packet nodes with no outgoing links.  

 Because the user node $n_{k+1}$ has no outgoing links, it either had $\script{H}_{n_{k+1}} = \phi$ (so that it never had any
 outgoing links), or all of its outgoing links were pointing to packet nodes that we have deleted, and so those
 packets were in the set $\{\script{R}_{n_1}, \ldots, \script{R}_{n_k}\}$.  That is, 
 we must have $\script{H}_{n_{k+1}} \subseteq \{\script{R}_{n_1}, \ldots, \script{R}_{n_k}\}$.  
  Therefore: 
  \begin{equation} \label{eq:have} 
   \{ \script{M}, \script{H}_{n_{k+1}}\} \subseteq \{\script{M}, \script{R}_{n_1}, \ldots, \script{R}_{n_k}\} 
   \end{equation} 
     However, at the end of the coding action, 
     node $n_{k+1}$ has exactly the information on the left-hand-side of \eqref{eq:have}, and 
     hence this information is sufficient to decode all packets in the set $\script{R}_{n_{k+1}}$.  Thus, the information 
     on the right-hand-side of \eqref{eq:have} must also be sufficient to decode $\script{R}_{n_{k+1}}$, so that: 
     \[   \{\script{M}, \script{R}_{n_1}, \ldots, \script{R}_{n_k}\} \iff  \{\script{M}, \script{R}_{n_1}, \ldots, \script{R}_{n_k}, \script{R}_{n_{k+1}}\}  \]
  But this together with \eqref{eq:induction} yields: 
  \[  \{\script{M} \} \iff \{\script{M}, \script{R}_{n_1}, \ldots, \script{R}_{n_k}, \script{R}_{n_{k+1}}\} \] 
which completes the induction step. 

By induction over $k \in \{1, \ldots, N-1\}$, it follows that: 
  \begin{equation} \label{eq:foo} 
    \{\script{M} \} \iff \{\script{M}, \script{R}_{n_1}, \ldots, \script{R}_{n_{N}}\} 
    \end{equation}  
  However, by re-labeling we have: 
  \begin{equation} \label{eq:foofoo}
   \{ \script{R}_{n_1}, \ldots, \script{R}_{n_{N}}\} = \{\script{R}_1, \ldots, \script{R}_N\} = \script{P} 
   \end{equation} 
  where the final equality holds by \eqref{eq:P}.  Combining \eqref{eq:foo} and \eqref{eq:foofoo} proves \eqref{eq:claim}. 
   \end{proof}

\section*{Appendix B -- Proof of Necessity for Theorem \ref{thm:decomp}} 

Let $\{\alpha[r]\}_{r=0}^{\infty}$ be a sequence of actions, chosen over frames,  that makes all queues $Q_m[r]$ rate
stable.  We show there must exist probabilities $p(\alpha)$ that satisfy \eqref{eq:cond1}. 
%For simplicity of notation, define $a_m[r] \defequiv arrivals_m[r]$. 
For each positive integer $R$ and each $m \in \{1, \ldots, M\}$,  
define $\overline{a}_m[R]$ and $\overline{\mu}_m[R]$ as the following averages over the first $R$ frames: 
\begin{eqnarray*}
\overline{a}_m[R] &\defequiv& \mbox{$\frac{1}{R}\sum_{r=0}^{R-1}$}arrivals_m[r] \\
\overline{\mu}_m[R] &\defequiv& \mbox{$\frac{1}{R}\sum_{r=0}^{R-1}$} \mu_m(\alpha[r])
\end{eqnarray*}
where $arrivals_m[r]$ is defined in \eqref{eq:arrivals-m}.   Now define $\script{F}(\alpha,R)$ as the 
set of frames $r \in \{0, \ldots, R-1\}$ that use action $\alpha$, and define $|\script{F}(\alpha, R)|$ as the 
number of these frames, so that $\sum_{\alpha\in\script{A}} |\script{F}(\alpha, R)| = R$. 
We then have: 
\begin{eqnarray}
\overline{a}_m[R] &=& \mbox{$\sum_{\alpha\in\script{A}}$} \frac{|\script{F}(\alpha, R)|}{R} \times \nonumber \\
&& \frac{1}{|\script{F}(\alpha, R)|}\sum_{r\in\script{F}(\alpha, R)} arrivals_m[r] \label{eq:a-decomp-app} \\
\overline{\mu}_m[R] &=& \mbox{$\sum_{\alpha\in\script{A}}$}\frac{|\script{F}(\alpha,R)|}{R}\mu_m(\alpha) \label{eq:mu-decomp-app} 
\end{eqnarray}
The set $\script{A}$ is finite.  Thus,  the values $\{(|\script{F}(\alpha, R)|/R)\}_{R=1}^{\infty}$ can be viewed as an infinite sequence 
of bounded vectors 
(with dimension equal to the size of set $\script{A}$) defined on the index $R \in \{1, 2, 3, \ldots\}$, and hence
must have a convergent subsequence.  Let $R_k$ represent the sequence of frames on this subsequence, so that
there are values $p(\alpha)$ for all $\alpha \in \script{A}$ such that: 
\begin{eqnarray*}
\lim_{k\rightarrow\infty} |\script{F}(\alpha, R_k)|/R_k = p(\alpha)
\end{eqnarray*}
Further, by \eqref{eq:mu-decomp-app} we have for all $m \in \{1, \ldots, M\}$: 
\begin{equation} \label{eq:mu-limit} 
 \lim_{k\rightarrow\infty} \overline{\mu}_m[R_k] = \mbox{$\sum_{\alpha\in\script{A}}$} p(\alpha)\mu_m(\alpha) 
 \end{equation} 
 Likewise, from \eqref{eq:a-decomp-app} and the law of large numbers (used over each $\alpha\in\script{A}$ for which 
 $\lim_{k\rightarrow\infty} |\script{F}(\alpha,R_k)| = \infty$, and noting that $arrivals_m[r]$ is i.i.d. with mean $T(\alpha)\lambda_m$
 for all $r \in \script{F}(\alpha, R)$) we have with probability 1: 
\begin{equation} \label{eq:a-limit} 
\lim_{k\rightarrow\infty} \overline{a}_m[R_k] = \sum_{\alpha\in\script{A}} p(\alpha)T(\alpha)\lambda_m
\end{equation} 

Because $|\script{F}(\alpha, R_k)|/R_k \geq 0$ for all $\alpha\in\script{A}$ and all $R_k$, 
and  $\sum_{\alpha\in\script{A}} |\script{F}(\alpha, R_k)|/R_k = 1$ for all $R_k$, 
the same holds for the limiting values $p(\alpha)$.  That is, 
$p(\alpha) \geq 0$ for all $\alpha \in \script{A}$, and: 
$\sum_{\alpha\in\script{A}} p(\alpha) = 1$.
Because each queue $Q_m[r]$ is rate stable, we have with probability 1 that for all $m \in \{1, \ldots, M\}$: 
\begin{equation} \label{eq:Qrs} 
 \lim_{k\rightarrow\infty} \frac{Q_m[R_k]}{R_k} = 0 
 \end{equation} 
However, from the queue update equation \eqref{eq:q-update} we have for all $r \in \{0, 1, 2, \ldots\}$: 
\[ Q_m[r+1] \geq Q_m[r] - \mu_m(\alpha[r]) + arrivals_m[r] \]
Summing the above over $r \in \{0, 1, \ldots, R_k-1\}$ and dividing by $R_k$ yields: 
\[ \frac{Q_m[R_k] - Q_m[0]}{R_k} \geq -\overline{\mu}_m[R_k] + \overline{a}_m[R_k] \]
Taking a limit as $k\rightarrow\infty$ and using \eqref{eq:mu-limit}-\eqref{eq:Qrs} yields:  
\begin{equation} \label{eq:penultimate-app2} 
0 \geq \mbox{$-\sum_{\alpha\in\script{A}}$} p(\alpha)\mu_m(\alpha) + \lambda_m\sum_{\alpha\in\script{A}} p(\alpha)T(\alpha)
\end{equation} 
This proves the result. 

 \section*{Appendix C --- Proof of Theorem \ref{thm:maxweight}} 
  
  We first prove rate stability for Algorithm 2, which uses a ratio rule. 
  The proof for Algorithm 1 is simpler and is given after. 
  We have the following preliminary lemma. 
    
  \begin{lem} \label{lem:suff}  (Sufficient Condition for Rate Stability \cite{lyap-opt2}): 
Let $Q[r]$ be a non-negative stochastic process defined over the integers $r \in \{0, 1, 2,  \ldots\}$. Suppose
there are constants $B$, $C$, $D$ such that for all frames $r \in \{0, 1, 2,  \ldots\}$ we have: 
\begin{eqnarray}
\expect{(Q[r+1] - Q[r])^2} &\leq& D \label{eq:rs1}  \\
\expect{Q[r]^2} &\leq& Br + C \label{eq:rs2}  
\end{eqnarray}
Then $\lim_{r\rightarrow\infty}Q[r]/r = 0$ with probability 1.   
  \end{lem}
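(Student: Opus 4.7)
The plan is to prove $Q[r]/r \to 0$ almost surely by the standard Borel--Cantelli strategy: first establish the limit along a sparse deterministic subsequence using the second-moment hypothesis \eqref{eq:rs2}, and then use the bounded-increment hypothesis \eqref{eq:rs1} to control the oscillation of $Q$ between consecutive subsequence points.

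For the subsequence step, I would choose $r_k = k^2$. For any $\epsilon > 0$, Markov's inequality together with \eqref{eq:rs2} gives
\[ P\!\left(\frac{Q[k^2]}{k^2} \geq \epsilon\right) \leq \frac{\expect{Q[k^2]^2}}{\epsilon^2 k^4} \leq \frac{Bk^2 + C}{\epsilon^2 k^4}, \]
which is summable in $k$, so the first Borel--Cantelli lemma yields $Q[k^2]/k^2 \to 0$ almost surely.

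For the oscillation step, I would bound $Q[r] - Q[k^2]$ for $r$ in the block $\{k^2, k^2+1, \ldots, (k+1)^2\}$, which has length $2k+1$. Writing $\Delta[j] = Q[j+1] - Q[j]$, telescoping and then applying the triangle inequality and Cauchy--Schwarz gives
\[ \max_{k^2 \leq r \leq (k+1)^2}(Q[r]-Q[k^2])^2 \leq (2k+1)\sum_{j=k^2}^{(k+1)^2-1}\Delta[j]^2. \]
Taking expectations and applying \eqref{eq:rs1} term by term yields $\expect{\max_{k^2 \leq r \leq (k+1)^2}(Q[r]-Q[k^2])^2} \leq (2k+1)^2 D$, and a second Markov--Borel--Cantelli argument shows that $\max_{k^2 \leq r \leq (k+1)^2}|Q[r]-Q[k^2]|/k^2 \to 0$ almost surely. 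Writing $k(r) = \lfloor \sqrt r \rfloor$ so that $k(r)^2 \leq r < (k(r)+1)^2$, the triangle inequality
\[ \frac{Q[r]}{r} \leq \frac{Q[k(r)^2]}{k(r)^2} + \frac{|Q[r]-Q[k(r)^2]|}{k(r)^2} \]
then combines the two pieces and delivers the conclusion.

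The main obstacle will be the oscillation step. We only have a bound on the \emph{pointwise} second moment of each increment, so the block-maximum deviation must be controlled by a crude Cauchy--Schwarz sum rather than by a martingale maximal inequality, since the hypotheses do not supply a martingale structure. The subsequence $r_k = k^2$ is a convenient choice that makes both the direct second-moment tails and the block-oscillation tails simultaneously $O(1/k^2)$ and thus summable; a linear subsequence would fail the first step, while a geometric subsequence would make the blocks grow too fast for the second step to go through.
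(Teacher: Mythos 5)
Your proposal is correct, but note that there is nothing in the paper to compare it against: Lemma \ref{lem:suff} is quoted from the cited reference \cite{lyap-opt2} and used as a black box, with no in-paper proof. Your argument is a valid, self-contained derivation in the standard style for this kind of rate-stability lemma. The two-step scheme works: along $r_k = k^2$, Markov plus \eqref{eq:rs2} gives tails $(Bk^2+C)/(\epsilon^2k^4)$, which are summable, so Borel--Cantelli yields $Q[k^2]/k^2 \to 0$ almost surely; and for the oscillation, telescoping plus Cauchy--Schwarz gives the pathwise bound $\max_{k^2\le r\le (k+1)^2}(Q[r]-Q[k^2])^2 \le (2k+1)\sum_{j=k^2}^{(k+1)^2-1}(Q[j+1]-Q[j])^2$, whose expectation is at most $(2k+1)^2D$ by \eqref{eq:rs1}, so the block-deviation tails are again $O(k^{-2})$ and summable. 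The final comparison $Q[r]/r \le Q[r]/\lfloor\sqrt{r}\rfloor^2 \le Q[k(r)^2]/k(r)^2 + |Q[r]-Q[k(r)^2]|/k(r)^2$ correctly uses the non-negativity of $Q$ and the fact that $k(r)\to\infty$. The only steps left implicit are routine: upgrading the fixed-$\epsilon$ Borel--Cantelli conclusions to almost-sure limits by intersecting over a countable sequence $\epsilon_n \downarrow 0$, and intersecting the two probability-one events before combining. Your closing diagnosis of the subsequence choice is also accurate: a linear subsequence makes the first tails non-summable, while a geometric one passes the first step but defeats the crude Cauchy--Schwarz oscillation control, since the hypotheses provide no martingale structure for a maximal inequality; quadratic growth is exactly what makes both tail series summable simultaneously.
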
 
 
  The condition \eqref{eq:rs1} is immediately satisfied in our system because the   
  queue changes over any frame are bounded. Thus, to prove rate stability, 
   it   suffices to show that \eqref{eq:rs2} holds for all queues and 
   all frames.  That is, it suffices to prove the second moment of queue
  backlog grows at most linearly.

 For each frame $r \in \{0, 1, 2, \ldots\}$, define the following 
 quadratic function $L[r]$, called a \emph{Lyapunov function}: 
  \[ L[r] \defequiv \mbox{$\frac{1}{2}\sum_{m=1}^M$}Q_{m}[r]^2 \]
  Define the \emph{conditional Lyapunov drift} $\Delta[r]$ to be the expected change in $L[r]$ from one frame to the
  next: 
  \[ \Delta[r] \defequiv \expect{L[r+1] - L[r] | \bv{Q}[r]} \]
  where $\bv{Q}[r] = (Q_1[r], \ldots, Q_M[r])$ is the queue backlog vector on frame $r$.  The above 
 conditional expectation is with respect to the random arrivals over the frame and the (possibly random)
  coding action chosen for the frame.

  \begin{lem} \label{lem:causal-drift} 
  Under any (possibly randomized) decision for $\alpha[r] \in \script{A}$ that is \emph{causal} (i.e., that
  does not know the future values of arrivals over the frame), we have for each frame $r$: 
   \begin{eqnarray*}
   \Delta[r] \leq B  + \mbox{$\sum_{m=1}^M$}Q_{m}[r]\expect{\lambda_{m}T(\alpha[r]) - \mu_{m}(\alpha[r])|\bv{Q}[r]} 
   \end{eqnarray*}
   where $B$ is a finite constant that satisfies: 
   \[ B \geq \mbox{$\frac{1}{2}\sum_{m=1}^M$}\expect{arrivals_m[r]^2 + \mu_m(\alpha[r])^2|\bv{Q}[r]} \]
   Such a finite constant $B$ exists because frame sizes are bounded, as are the arrivals per slot. 
  \end{lem}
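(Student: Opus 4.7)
The plan is to derive a standard quadratic Lyapunov drift bound. Starting from the queue update \eqref{eq:q-update}, I would square both sides and apply the elementary inequality $(\max[x-y,0]+a)^2 \leq x^2 + y^2 + a^2 + 2x(a-y)$, valid for $x,y,a \geq 0$ and obtained by combining $\max[x-y,0]^2 \leq (x-y)^2$ with $\max[x-y,0] \leq x$ in the cross term. Applying this with $x = Q_m[r]$, $y = \mu_m(\alpha[r])$, $a = arrivals_m[r]$, summing over $m \in \{1,\ldots,M\}$, and dividing by $2$ gives a pointwise (sample path) upper bound on $L[r+1] - L[r]$ whose purely quadratic terms involve only $\mu_m(\alpha[r])^2$ and $arrivals_m[r]^2$, while the remaining cross term is $\sum_m Q_m[r](arrivals_m[r] - \mu_m(\alpha[r]))$.

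Next I would take conditional expectation given $\bv{Q}[r]$. The purely quadratic terms are uniformly bounded because per-slot arrivals lie in $\{0,1\}$, frame lengths are at most $T_{max}$, and $\mu_m(\alpha) \leq \mu_{max}$; their conditional expectation is dominated by the constant $B$ in the stated form. The cross term becomes $\sum_m Q_m[r]\,\expect{arrivals_m[r] - \mu_m(\alpha[r]) \mid \bv{Q}[r]}$, and it remains to identify $\expect{arrivals_m[r] \mid \bv{Q}[r]}$ with $\lambda_m\,\expect{T(\alpha[r]) \mid \bv{Q}[r]}$. I would do this by conditioning further on $\alpha[r]$ and using iterated expectations: given $\alpha[r]$, frame $r$ spans $T(\alpha[r])$ slots, and because the decision is causal (chosen without knowledge of the future values $A_m(\tau)$ for $\tau \geq t[r]$), the i.i.d.\ Bernoulli arrivals over those slots are independent of $\alpha[r]$, so from \eqref{eq:arrivals-m} we get $\expect{arrivals_m[r] \mid \bv{Q}[r], \alpha[r]} = \lambda_m T(\alpha[r])$. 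Un-conditioning on $\alpha[r]$ and substituting delivers the claimed drift inequality.

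The main subtlety — and essentially the only non-routine point — is this last causality argument. If $\alpha[r]$ were allowed to peek at the realized intra-frame arrivals, one could not pass from $\expect{arrivals_m[r] \mid \bv{Q}[r]}$ to $\lambda_m\,\expect{T(\alpha[r]) \mid \bv{Q}[r]}$, and the drift bound would pick up extra cross-covariance terms between $T(\alpha[r])$ and the arrivals. Everything else (the squaring inequality, the sum over $m$, and the bounding of the quadratic terms by a finite $B$ using $T_{max}$, $\mu_{max}$, and the $\{0,1\}$ arrivals) is the routine Lyapunov-drift manipulation that underlies standard max-weight analyses, so once causality is invoked the lemma follows directly.
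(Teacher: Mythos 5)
Your proposal is correct and follows essentially the same route as the paper: the same squaring inequality $(\max[Q-b,0]+a)^2 \leq Q^2+b^2+a^2+2Q(a-b)$, the same absorption of the quadratic terms into $B$ via bounded frame lengths and bounded per-slot arrivals, and the same key identification $\expect{arrivals_m[r]\mid \bv{Q}[r]} = \lambda_m\expect{T(\alpha[r])\mid\bv{Q}[r]}$ via causality and iterated expectations (the paper conditions on $T(\alpha[r])$, you on $\alpha[r]$, an immaterial difference). No gaps.
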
 
 \begin{proof} 
 For simplicity of notation, define $b_m[r] \defequiv \mu_{m}(\alpha[r])$, and 
  $a_m[r] \defequiv arrivals_m[r]$. 
 The queue update equation is thus: 
 \[ Q_m[r+1] = \max[Q_m[r] - b_m[r], 0] + a_m[r] \]
 Note that for any non-negative values $Q, a, b$ we have: 
  \[ (\max[Q - b, 0] + a)^2 \leq 
 Q^2 + b^2 + a^2 + 2Q(a-b) \]
 Using this and 
 squaring the queue update equation yields: 
 \[ Q_{m}[r+1]^2 \leq Q_{m}[r]^2 + b_{m}[r]^2 + a_{m}[r]^2 + 2Q_{m}[r][a_m[r] - b_m[r]] \]
 Summing over all $m$, dividing by $2$, and taking conditional expectations yields: 
 \begin{equation} \label{eq:last} 
  \Delta[r] \leq B + \mbox{$\sum_{m=1}^M$}Q_{m}[r]\expect{a_{m}[r] - b_{m}[r]|\bv{Q}[r]} 
  \end{equation} 
 Now note that:\footnote{Equality \eqref{eq:zappa} uses causality and the i.i.d. nature of the arrival process.
 It is formally proven by conditioning on $T(\alpha[r])$ and using iterated expectations.} 
 \begin{eqnarray}
 \expect{a_{m}[r]| \bv{Q}[r]} &=& \expect{\sum_{\tau={t[r]}}^{t[r]+T(\alpha[r])-1} A_m(\tau)|\bv{Q}[r]}  \nonumber \\
 &=& \expect{\lambda_mT[r]|\bv{Q}[r]} \label{eq:zappa}
 \end{eqnarray}
 Plugging this identity into \eqref{eq:last} proves the result. 
 \end{proof}

 We now prove that Algorithm 2 yields rate stability.

\begin{proof} (Theorem \ref{thm:maxweight}---Stability Under Algorithm 2) 
Suppose that Algorithm 2 is used, so that we choose $\alpha[r]$ every frame $r$ via \eqref{eq:mw-alg2}. 
We first claim that for each frame $r$ and for all possible $\bv{Q}[r]$ we have:  
\begin{eqnarray}
\frac{\expect{\sum_{m=1}^MQ_{m}[r]\mu_{m}(\alpha[r])|\bv{Q}[r]}}{\expect{T(\alpha[r])|\bv{Q}[r]}} \geq \nonumber \\
\frac{\expect{\sum_{m=1}^MQ_{m}[r]\mu_{m}(\alpha^*[r])|\bv{Q}[r]}}{\expect{T(\alpha^*[r])|\bv{Q}[r]}} \label{eq:ratio-compare}
\end{eqnarray}
where $\alpha^*[r]$ is any other (possibly randomized) code action that could be chosen over the options in the 
set $\script{A}$. 
This can be shown as follows:  Suppose we want to choose $\alpha[r] \in \script{A}$ via a possibly randomized
decision, to maximize the ratio of expectations in the left-hand-side of \eqref{eq:ratio-compare}.  Such a decision would
satisfy \eqref{eq:ratio-compare} by definition, since it would maximize the ratio of expectations over all alternative
policies $\alpha^*[r]$.  However, it is known that such a maximum is achieved via a \emph{pure policy} that chooses
a particular $\alpha \in \script{A}$ with probability 1 (see Chapter 7 of \cite{sno-text}).  The best pure policy is thus
the one that observes the queue backlogs $\bv{Q}[r]$ and chooses $\alpha[r] \in \script{A}$ to maximize the deterministic
ratio, which is exactly how Algorithm 2 chooses its action (see \eqref{eq:mw-alg2}).  

Thus, \eqref{eq:ratio-compare} holds.  We can rewrite \eqref{eq:ratio-compare} as: 
\begin{eqnarray}
\frac{\expect{\sum_{m=1}^MQ_{m}[r]\mu_{m}(\alpha[r])|\bv{Q}[r]}}{\expect{T(\alpha[r])|\bv{Q}[r]}} \geq \nonumber \\
\sum_{m=1}^MQ_{m}[r]\frac{\expect{\mu_{m}(\alpha^*[r])|\bv{Q}[r]}}{\expect{T(\alpha^*[r])|\bv{Q}[r]}} \label{eq:ratio-compare2} 
\end{eqnarray}
We can thus plug any alternative (possibly randomized) decision  $\alpha^*[r]$ into the right-hand-side
of \eqref{eq:ratio-compare2}.   Consider the randomized algorithm that independently selects $\alpha\in\script{A}$ every 
frame, independent of queue backlogs, according to the distribution $p(\alpha)$ in Theorem \ref{thm:decomp}.
Let $\alpha^*[r]$ represent the randomized decision under this policy.  Then from \eqref{eq:cond1} we have
for all $m \in \{1, \ldots, M\}$: 
\begin{eqnarray}
 \lambda_m \leq \frac{\expect{\mu_m(\alpha^*[r])}}{\expect{T(\alpha^*[r])}} =
 \frac{\expect{\mu_m(\alpha^*[r])|\bv{Q}[r]}}{\expect{T(\alpha^*[r])|\bv{Q}[r]}} \label{eq:again} 
 \end{eqnarray}
where the last equality holds because $\alpha^*[r]$ is chosen independently of $\bv{Q}[r]$. Using this
in \eqref{eq:ratio-compare2} yields: 
\begin{eqnarray*}
\frac{\expect{\sum_{m=1}^MQ_{m}[r]\mu_{m}(\alpha[r])|\bv{Q}[r]}}{\expect{T(\alpha[r])|\bv{Q}[r]}} \geq 
\sum_{m=1}^MQ_{m}[r]\lambda_{m} 
\end{eqnarray*}
Rearranging terms above yields: 
\begin{equation} \label{eq:appc-final} 
\mbox{$\sum_{m=1}^M$}Q_m[r]\expect{\lambda_mT(\alpha[r]) - \mu_m(\alpha[r])|\bv{Q}[r]} \leq 0 
\end{equation} 
Plugging \eqref{eq:appc-final} into the drift bound of Lemma \ref{lem:causal-drift} yields: 
\[ \Delta[r] \leq B \]
Taking expectations of the above and using the definition of $\Delta[r]$ yields: 
\[ \expect{L[r+1]} - \expect{L[r]} \leq B \: \: \forall r \in \{0, 1, 2, \ldots\} \]
Summing the above over $r \in \{0, 1, \ldots, R-1\}$  yields: 
\[ \expect{L[R]}  - \expect{L[0]}\leq  BR \]
and hence for all $R>0$: 
\[ \mbox{$\sum_{m=1}^M$}\expect{Q_m[R]^2} \leq 2\expect{L[0]} + 2BR \]
Thus, the second moments of all queues grow at most linearly, from which we 
guarantee rate stability by Lemma \ref{lem:suff}. 
\end{proof} 

 We now prove that Algorithm 1 yields rate stability. 
 
 \begin{proof} (Theorem \ref{thm:maxweight}---Stability Under Algorithm 1) 
 Note that Algorithm 1 is designed to observe queue backlogs $\bv{Q}[r]$ every frame $r$, and 
 take a control action $\alpha[r] \in \script{A}$ to minimize the right-hand-side of the drift bound 
 in Lemma \ref{lem:causal-drift}.  Therefore, we have: 
  \begin{eqnarray*}
  \Delta[r] \leq B + \sum_{m=1}^MQ_m[r]\expect{\lambda_mT(\alpha^*[r]) - \mu_m(\alpha^*[r])|\bv{Q}[r]}
  \end{eqnarray*}
 where $\alpha^*[r]$ is any other (possibly randomized) decision.  If $\alpha^*[r]$ makes a decision independent
 of $\bv{Q}[r]$ we have: 
 \begin{eqnarray}
   \Delta[r] \leq B + \sum_{m=1}^MQ_m[r]\expect{\lambda_mT(\alpha^*[r]) - \mu_m(\alpha^*[r])}
  \label{eq:alg1} 
\end{eqnarray}
 Consider again 
 randomized algorithm $\alpha^*[r]$ that independently and randomly selects an action in $\script{A}$ every 
frame, independent of queue backlogs, according to the distribution $p(\alpha)$ in Theorem \ref{thm:decomp}.
Then \eqref{eq:again} again holds, so that for all $m \in \{1, \ldots, M\}$: 
\[ \expect{\lambda_m T(\alpha^*[r]) - \mu_m(\alpha^*[r])} \leq 0 \]
 Substituting the above into the right-hand-side of \eqref{eq:alg1} gives: 
 \[ \Delta[r] \leq B \]
 from which we then obtain rate stability in the same way as in the proof for Algorithm 2. 
 \end{proof} 

\section*{Appendix D --- Proof of the Queue Size Bound} 

Here we show that if $\bv{\lambda} \in \rho\Lambda_{\script{A}}$, where $0 \leq \rho < 1$, then 
both Algorithm 1 and Algorithm 2 yield finite average backlog of size $O(1/(1-\rho))$.

\begin{proof} (Queue Bound for Algorithm 1)
Because $\bv{\lambda} \in \rho \Lambda_{\script{A}}$, we have: 
\[ (\lambda_m/\rho) \in \Lambda_{\script{A}} \]
Thus, from Theorem \ref{thm:decomp} there is a randomized algorithm $\alpha^*[r]$ that makes decisions
independent of queue backlogs to yield the following for all $m \in \{1, \ldots, M\}$: 
\begin{equation} \label{eq:rho-again} 
 \frac{\lambda_m}{\rho} \leq \frac{\expect{\mu_m(\alpha^*[r])}}{\expect{T(\alpha^*[r])}} 
 \end{equation}
 Define $T^* \defequiv \expect{T(\alpha^*[r])}$.  Using this and rearranging the above gives: 
 \[ \expect{\mu_m(\alpha^*[r])} \geq \lambda_mT^*/\rho \: \: \forall m \in \{1, \ldots, M\} \]
 Substituting the above into \eqref{eq:alg1} gives: 
 \[ \Delta[r] \leq B + \sum_{m=1}^MQ_m[r]T^*\lambda_m(1 - 1/\rho) \]
 Taking expectations and using the definition of $\Delta[r]$ gives: 
 \[ \expect{L[r+1]} - \expect{L[r]} \leq B + \sum_{m=1}^M\expect{Q_m[r]}T^*\lambda_m(1-1/\rho) \]
 Summing over $r \in \{0, \ldots, R-1\}$ (for any integer $R>0$) gives: 
 \[ \expect{L[R]} - \expect{L[0]} \leq BR + \sum_{r=0}^{R-1}\sum_{m=1}^M\expect{Q_m[r]}T^*\lambda_m(1-1/\rho) \]
 Using the fact that $\expect{L[r]}\geq0$ and $\expect{L[0]} = 0$, dividing by $R$, and rearranging terms gives:   
 \[ \frac{1}{R}\sum_{r=0}^{R-1} \sum_{m=1}^M\lambda_m\expect{Q_m[r]} \leq \frac{B\rho}{T^*(1-\rho)}\]
 Because $T^*\geq 1$, the above bound can be simplified to $B\rho/(1-\rho)$. 
 The above holds for all $R$, and so the expected queue backlog is $O(1/(1-\rho))$.  Further, from \cite{sno-text} we
 can derive that the following holds with probability 1: 
 \[ \limsup_{R\rightarrow\infty} \frac{1}{R}\sum_{r=0}^{R-1}\sum_{m=1}^M\lambda_mQ_m[r] \leq \frac{B\rho}{T^*(1-\rho)}\]
\end{proof} 

\begin{proof} (Queue Bound for Algorithm 2) 
Recall that \eqref{eq:ratio-compare2} holds for every frame $r$ and all possible $\bv{Q}[r]$ for Algorithm 2. 
Using $\alpha^*[r]$ as an algorithm that makes randomized decisions on frame $r$ that are independent 
of $\bv{Q}[r]$ gives: 
\begin{eqnarray}
\frac{\expect{\sum_{m=1}^MQ_{m}[r]\mu_{m}(\alpha[r])|\bv{Q}[r]}}{\expect{T(\alpha[r])|\bv{Q}[r]}} \geq \nonumber \\
\frac{\sum_{m=1}^MQ_{m}[r]\expect{\mu_{m}(\alpha^*[r])}}{\expect{T(\alpha^*[r])}} \label{eq:ratio-compare3} 
\end{eqnarray}
where we have removed the conditional expectations on the right-hand-side.  Because $(\lambda_m/\rho) \in \Lambda_{\script{A}}$, 
we know that there is an algorithm that makes independent and randomized decisions to yield \eqref{eq:rho-again}. 
Plugging \eqref{eq:rho-again} into the right-hand-side of \eqref{eq:ratio-compare3} gives: 
\begin{eqnarray*}
\frac{\expect{\sum_{m=1}^MQ_{m}[r]\mu_{m}(\alpha[r])|\bv{Q}[r]}}{\expect{T(\alpha[r])|\bv{Q}[r]}} \geq 
\sum_{m=1}^MQ_{m}[r]\lambda_m/\rho
\end{eqnarray*}
Rearranging gives: 
\begin{eqnarray*}
\sum_{m=1}^MQ_m[r]\expect{\mu_m(\alpha[r])|\bv{Q}[r]} \geq \\
 \frac{1}{\rho}\sum_{m=1}^MQ_m[r]\expect{\lambda_mT(\alpha[r])|\bv{Q}[r]}
 \end{eqnarray*}
 Using this in the drift bound of Lemma \ref{lem:causal-drift} gives: 
 \[ \Delta[r] \leq B + \sum_{m=1}^MQ_m[r]\expect{\lambda_m T(\alpha[r]) - \frac{\lambda_m}{\rho}T(\alpha[r])|\bv{Q}[r]} \]
 That is: 
 \begin{eqnarray*}
  \Delta[r] &\leq& B + \sum_{m=1}^MQ_m[r]\lambda_m(1-1/\rho)\expect{T(\alpha[r])|\bv{Q}[r]} \\
  &\leq& 
  B + \sum_{m=1}^MQ_m[r]\lambda_m(1-1/\rho) 
 \end{eqnarray*}
 where we have used the fact that $\expect{T(\alpha[r])|\bv{Q}[r]} \geq 1$, and $1-1/\rho \leq 0$. 
 We thus have by the same argument as in the previous proof that for any $R>0$: 
 \[ \frac{1}{R}\sum_{r=0}^{R-1} \sum_{m=1}^M\lambda_m\expect{Q_m[r]} \leq \frac{\rho B}{1-\rho} \]
 and with probability 1: 
 \[ \limsup_{R\rightarrow\infty} \frac{1}{R}\sum_{r=0}^{R-1}\sum_{m=1}^M\lambda_mQ_m[r] \leq \frac{\rho B}{1-\rho} \]
\end{proof} 

\section*{Appendix E --- The Disjoint Cycle Theorem for General Index Coding} 

The weighted compressed graph $\WCG$ was introduced in Section \ref{section:broadcast} for the context
of broadcast relay networks, being special cases of index coding problems where each packet has exactly one 
incoming link and exactly one outgoing link.   However, such a graph is also useful for general index coding
problems, with general directed bipartite graphs $\script{G}$.

Consider any such general directed bipartite graph $\script{G}$ with user nodes $\script{N}$ and packet nodes
$\script{P}$.  Define $\WCG$ similarly: 
It is a graph on the user nodes $\script{N}$, a link $(i,j)$ exists if and only if user $i$ has a packet as side information that 
is wanted by user $j$, and each link is weighted by $P_{ij}$, the (integer) number of distinct packets of this type.  
An example of a graph $\script{G}$ and its weighted compressed graph $\WCG$ is given in Fig. \ref{fig:compressed-graphs}. 

\begin{figure}[htbp]
   \centering
   \includegraphics[height=1.5in, width=3in]{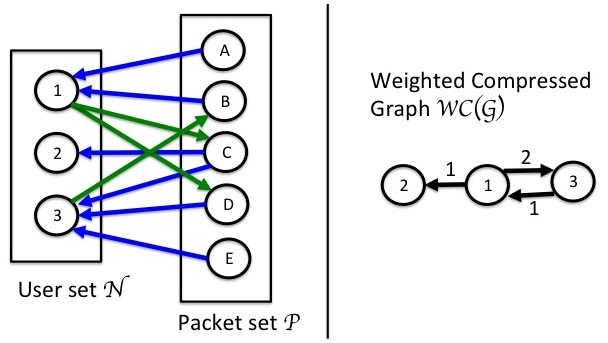} % requires the graphicx package
   \caption{An example graph $\script{G}$ and its weighted compressed graph  $\WCG$.}
   \label{fig:compressed-graphs}
\end{figure}

Note that weighted compressed graphs in this case typically include less information than
  the original graph $\script{G}$.  For example,  the existence of a link $(a,b)$ with weight $P_{ab}$ 
  in the  graph $\WCG$  tells us that node $a$ has $P_{ab}$ distinct packets that are wanted by node $b$, 
   but does not specify which packets these are, or if these packets also take part in the weight count on other links
   of $\WCG$.  Indeed, in the example of Fig. \ref{fig:compressed-graphs}, packet $C$ is the packet corresponding
   to the weight 1 on link $(1,2)$ in $\WCG$, and packets $C$ and $D$ correspond to the weight 2 on link $(1,3)$
   in $\WCG$, so that the same packet $C$ is included in the weight for two different links.  Note also that, unlike  broadcast
   relay networks, the sum of the weights of $\WCG$ is not necessarily equal to the number of packets $P$ in $\script{G}$.
  
  \begin{lem}  \label{lem:compress} The original demand graph $\script{G}$ is acyclic if and only if $\WCG$ is acyclic.    
  \end{lem}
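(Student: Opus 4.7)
The plan is to exploit the bipartite structure of $\script{G}$: any directed cycle in $\script{G}$ must alternate between user nodes and packet nodes, since all links either go from a user to a packet (side-information link) or from a packet to a user (demand link). Hence every simple directed cycle in $\script{G}$ has even length and takes the form $u_1 \to p_1 \to u_2 \to p_2 \to \cdots \to u_k \to p_k \to u_1$ with distinct users $u_1,\ldots,u_k$ and distinct packets $p_1,\ldots,p_k$, where for each $i$ user $u_i$ has packet $p_i$ and user $u_{i+1 \bmod k}$ wants packet $p_i$. This will be the key structural observation used throughout.

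For the forward direction, I would start from such a cycle in $\script{G}$ and simply project it to the user layer. By the definition of $\WCG$, the existence of $p_i$ means that the directed edge $(u_i, u_{i+1 \bmod k})$ is present in $\WCG$, with weight $P_{u_i u_{i+1}} \geq 1$. Since the $u_i$'s are distinct, this yields a simple directed cycle $u_1 \to u_2 \to \cdots \to u_k \to u_1$ in $\WCG$.

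For the reverse direction the plan is to take a simple directed cycle in $\WCG$ of \emph{minimum length} $k$, say $u_0 \to u_1 \to \cdots \to u_{k-1} \to u_0$, and for each edge $(u_i, u_{i+1 \bmod k})$ pick some witness packet $p_i$ that $u_i$ has and $u_{i+1 \bmod k}$ wants. The remaining obstacle — the only real difficulty in the proof — is that these packets need not be distinct, so the naive lift $u_0 \to p_0 \to u_1 \to p_1 \to \cdots$ might not be a \emph{simple} cycle in $\script{G}$. I would resolve this by a minimality argument: if $p_i = p_j$ for some $i < j$, then since $u_i$ has this common packet and $u_{j+1 \bmod k}$ wants it, the edge $(u_i, u_{j+1 \bmod k})$ is in $\WCG$ and can be used to short-circuit the cycle to $u_0 \to \cdots \to u_i \to u_{j+1 \bmod k} \to \cdots \to u_0$ of length $k-(j-i)<k$, contradicting minimality. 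Therefore the chosen packets must all be distinct, and the lift produces a genuine simple cycle in $\script{G}$.

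Put together, these two directions give the biconditional. I expect the routine part to be the forward direction and the bipartite parity observation, while the only subtle step is the minimality argument in the reverse direction that forces the witness packets to be distinct; that is where the proof really lives.
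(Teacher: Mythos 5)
Your proof is correct and follows essentially the same route as the paper's: project a cycle of $\script{G}$ onto the user layer for one direction, and lift a cycle of $\WCG$ back to $\script{G}$ by choosing a witness packet for each link in the other direction. The only difference is that you handle the possible coincidence of witness packets explicitly via your minimum-length-cycle argument, whereas the paper passes over this point (the lifted walk is in any case a closed walk in $\script{G}$, which necessarily contains a simple cycle), so your extra step adds rigor to the same idea rather than a new approach.
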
 
  
  \begin{proof} 
 We show that $\script{G}$ has cycles
  if and only if $\WCG$ has cycles.   The proof can be understood directly from Fig. \ref{fig:cngproof}. 
  Suppose that $\WCG$ has a cycle of length $K$, as shown in 
  Fig. \ref{fig:cngproof}a.  Relabeling the nodes, this cycle uses nodes $\{1, \ldots, K\}$, and has 
  links $\{(1,2), (2, 3), \ldots, (K-1, K), (K, 1)\}$.  However, for each link $(a,b)$ of this cycle, there must 
  be a packet $x_{ab} \in \script{P}$ such that $x_{ab} \in \script{H}_a$ (so the graph $\script{G}$ has a directed
  link from user node $a$ to  packet node $x_{ab}$), and $x_{ab} \in \script{R}_b$ (so the graph $\script{G}$ has
  a directed link from packet node $x_{ab}$ to user node $b$).  
  This means that the original graph $\script{G}$ has a cycle, as depicted in Fig. \ref{fig:cngproof}b.
  
  Conversely, if the bipartite graph $\script{G}$ has a cycle, it must alternate between user nodes and 
  packet nodes, with a structure as depicted in Fig. \ref{fig:cngproof}b.  From this, it is clear that $\WCG$ also
  has a cycle.   
    \end{proof} 
    
    \begin{figure}[htbp]
       \centering
       \includegraphics[height=1.5in, width=3.5in]{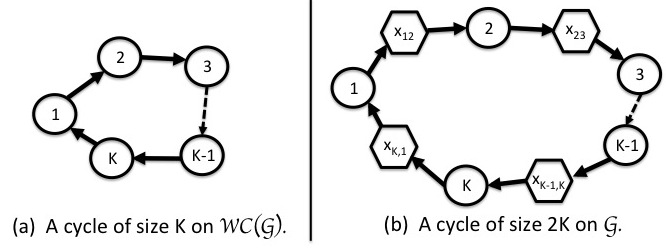} % requires the graphicx package
       \caption{An illustration for the proof of Lemma \ref{lem:compress}.}
       \label{fig:cngproof}
    \end{figure}
  
  Lemma \ref{lem:compress} 
  makes it easier to see whether or not a given graph $\script{G}$ is acyclic.  That the graph $\script{G}$ in Fig. \ref{fig:compressed-graphs} has cycles is immediately apparent from its much simpler weighted compressed graph $\WCG$, which has a single
  cycle of size $2$ consisting of nodes $1$ and $3$.  As another example, it may not be immediately clear
  that the graph $\script{G}$ 
  in Fig. \ref{fig:bipartite-dag} is acyclic.  However, its graph $\WCG$ 
  has only 3 nodes $\script{N} = \{1, 2, 3\}$ and two links $(1,3)$ and $(3,2)$, and from this the acyclic structure is 
  obvious.    These compressed graphs are also useful for coding,  even when they have cycles, as shown next. 

We say that a packet is a \emph{unicast packet} if it has at most one outgoing link (so that it is intended for only
one destination user), and a packet is a \emph{multicast packet} if it is intended for more than one destination user. 
We say that the weighted graph $\WCG$ has disjoint cycles if no link participates in more than one cycle. 
Suppose now that $\WCG$ has disjoint cycles, and let $K$ be the number
of such cycles.  As before, for each $k \in \{1, \ldots, K\}$, we let $\script{C}^{(k)}$ represent the set of links for the $k$th 
cycle, and let $w_{min}^{(k)}$ represent the weight of the minimum weight link in $\script{C}^{(k)}$. Note
that removing this link from each cycle results in a new graph that is acyclic.  This is used in the following theorem. 

\begin{thm} \label{thm:disjoint} Let $\script{G}$ be a demand graph with $N$ nodes and $P$ packets.  Suppose that 
$\WCG$ has disjoint
cycles, that there are $K$ such cycles, and that  all packets of these cycles are distinct and are unicast packets. Then: 
\[ T_{min}(\script{G}) = P-\sum_{k=1}^K w_{min}^{(k)} \]
where $P$ is the number of packets in $\script{G}$.
  Furthermore, 
the minimum clearance time can be achieved by performing cyclic coding $w_{min}^{(k)}$ times for each cycle
$k \in \{1, \ldots, K\}$, and then transmitting all the remaining packets without coding. 
\end{thm}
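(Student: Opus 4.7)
The plan is to mirror the two-step argument used for Theorem \ref{thm:disjoint-cycles} and lift it to the general bipartite setting. First I would establish the lower bound $T_{min}(\script{G}) \geq P - \sum_{k=1}^K w_{min}^{(k)}$ by pruning $\script{G}$ to an acyclic subgraph and invoking Theorem \ref{thm:acyclic}; second, I would match this bound with an explicit schedule built from cyclic coding actions on each of the $K$ disjoint cycles, followed by direct transmission of every remaining packet. The new ingredient relative to the broadcast-relay proof is Lemma \ref{lem:compress}, which transfers acyclicity between $\script{G}$ and $\WCG$, and the hypotheses that cycle packets are distinct and unicast serve to keep the bookkeeping identical in form.

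For the lower bound, I would pick the minimum-weight link on each of the $K$ disjoint cycles of $\WCG$ (breaking ties arbitrarily) and form $\script{G}'$ by deleting from $\script{G}$ the $w_{min}^{(k)}$ packet nodes that underlie each such link. By the distinctness hypothesis the total number of deleted packet nodes is exactly $\sum_{k=1}^K w_{min}^{(k)}$, so $\script{G}'$ has $P - \sum_{k=1}^K w_{min}^{(k)}$ packets. Comparing weighted compressed graphs, each chosen min-weight link disappears from $\script{W}\script{C}(\script{G}')$, and other link weights can only decrease; no new link is ever created. Because the simple cycles of $\WCG$ are pairwise link-disjoint, removing one link per cycle destroys every simple cycle, so $\script{W}\script{C}(\script{G}')$ is acyclic. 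Lemma \ref{lem:compress} then transfers this to $\script{G}'$, and Theorem \ref{thm:acyclic} yields $T_{min}(\script{G}') = P - \sum_{k=1}^K w_{min}^{(k)}$. The pruning lemma gives $T_{min}(\script{G}) \geq T_{min}(\script{G}')$ and closes this direction.

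For achievability, I would schedule cycle-by-cycle. For each cycle $k$ of length $L_k$, I run $w_{min}^{(k)}$ rounds of $L_k$-cycle coding as in Section \ref{section:cyclic-codes}, each round using one fresh packet per link around the cycle; this is feasible because every link of the cycle carries weight at least $w_{min}^{(k)}$, and the unicast hypothesis means the coded packets are fully discharged to their unique destinations in $L_k - 1$ slots. After all cycles have been processed, the $P - \sum_{k=1}^K L_k w_{min}^{(k)}$ remaining packets are transmitted one per slot. The total slot count then telescopes:
\[
\sum_{k=1}^K (L_k - 1)w_{min}^{(k)} + \Bigl(P - \sum_{k=1}^K L_k w_{min}^{(k)}\Bigr) = P - \sum_{k=1}^K w_{min}^{(k)},
\]
matching the lower bound.

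The main obstacle I anticipate is confirming the acyclicity of $\script{W}\script{C}(\script{G}')$. Unlike the broadcast-relay case, where every packet contributes to exactly one link of $\WCG$, a unicast packet in a general bipartite demand graph may appear as side information at several users and so contribute to the weights of several $\WCG$-links simultaneously; consequently, deleting a single cycle's packets can shave weight off links outside that cycle. The argument still closes because packet deletion is monotone on $\WCG$: links can only disappear or lose weight, never be introduced. Once each of the $K$ disjoint cycles has had at least one of its links eliminated, no simple cycle of $\WCG$ can survive, and Lemma \ref{lem:compress} carries the conclusion back to $\script{G}'$.
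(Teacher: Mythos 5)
Your proposal is correct and follows essentially the same two-step argument as the paper: delete the packets on each cycle's min-weight link, invoke Lemma \ref{lem:compress} and Theorem \ref{thm:acyclic} on the resulting acyclic subgraph for the lower bound, then match it with $w_{min}^{(k)}$ rounds of cyclic coding per cycle followed by uncoded transmissions. Your explicit observation that packet deletion can only remove or lighten links of $\WCG$ (never create them) is a slightly more careful justification of the acyclicity of $\script{W}\script{C}(\script{G}')$ than the paper's wording, but the route is the same.
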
 
\begin{proof} 
For each cycle $k \in \{1, \ldots, K\}$, select a link with a link weight equal to the minimum
link weight $w_{min}^{(k)}$ (breaking ties arbitrarily).  Let $\script{P}^{(k)}$ be the set of all packets associated with this link. 
All packets in  $\cup_{k=1}^K \script{P}^{(k)}$ are distinct (by assumption), and
the total number of these packets is: 
\[ |\cup_{k=1}^K\script{P}^{(k)}| = \sum_{k=1}^Kw_{min}^{(k)}  \]
Now consider the subgraph $\script{G}'$ formed from $\script{G}$ by removing all packet nodes
in the set $\cup_{k=1}^K \script{P}^{(k)}$.   The number of packets $P'$ in this graph is thus: 
\[ P' = P - \sum_{k=1}^Kw_{min}^{(k)} \]
Further, we have $T_{min}(\script{G}') \leq T_{min}(\script{G})$.  
However, note that $\script{WC}(\script{G}')$ is the same as $\WCG$, with the exception that the
min-weight link on each of the $K$ cycles has been removed.  Thus,  $\script{WC}(\script{G}')$ is
acyclic, so that $\script{G}'$ is acyclic, and by Theorem \ref{thm:acyclic} we have $T_{min}(\script{G}') = P'$. 
It follows that: 
\[ T_{min}(\script{G}) \geq P'  \]

Thus, any algorithm for clearing all packets in the graph $\script{G}$ must use at least $P'$ slots.  
We now show that $T_{min}(\script{G}) = P'$ 
by designing a simple cyclic coding scheme to clear all packets in the original graph $\script{G}$ in exactly 
$P'$ slots.  By assumption, all packets that participate in cycles are distinct unicast packets, 
and hence they only need to be delivered to one destination, as defined by the link of the cycle they are in.  
This includes all packets in the set $\cup_{k=1}^K\script{P}^{(k)}$. 
We now use the obvious strategy:   
For each cycle $k \in \{1, \ldots, K\}$, we choose an undelivered packet
$p \in \script{P}^{(k)}$, and 
use a cyclic coding operation that clears $S^{(k)}$ distinct packets (including packet $p$) in
 $S^{(k)}-1$ slots, where $S^{(k)}$ is the size of cycle $k$. This can be done because packet $p$ is from the link 
of cycle $k$ with the fewest packets, and so there are always remaining packets on the other links of the cycle to 
use in the coding.  Once this is done for all cycles and for all packets on the min-weight link of each cycle, we then
transmit the remaining packets uncoded.  The total number of transmissions is thus equal to $P$ minus the savings
of $\sum_{k=1}^Kw_{min}^{(k)}$ from all of the cyclic coding operations.  Thus, this takes exactly $P - \sum_{k=1}^Kw_{min}^{(k)} = P'$ slots. 
\end{proof}

 \begin{cor} 
 If the demand graph $\script{G}$ has $P$ packets, but only two users (so that $\script{N} = \{1, 2\}$), then:  
 \[ T_{min}(\script{G}) = P - w_{min} \]
 where $w_{min} \defequiv \min[P_{12}, P_{21}]$, being the weight of the 
 min-weight
 link in the graph $\WCG$.
 Further, this minimum clearance time can be achieved by performing cyclic coding over the packets associated with this 
 min weight link, and then transmitting (uncoded) all remaining packets.  
 \end{cor}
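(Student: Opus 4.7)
The plan is to derive the corollary as a direct instance of Theorem \ref{thm:disjoint}, after observing that the two-user setting forces a very rigid structure on $\WCG$ and on the packets participating in any cycle.

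First I would analyze $\WCG$ when $|\script{N}| = 2$. The graph has only two nodes, so the only possible links are $(1,2)$ and $(2,1)$, with weights $P_{12}$ and $P_{21}$ respectively. If either weight is zero then $\WCG$ has no cycle, so by Lemma \ref{lem:compress} the original graph $\script{G}$ is acyclic, and Theorem \ref{thm:acyclic} gives $T_{min}(\script{G}) = P = P - w_{min}$, matching the claim (with uncoded transmission trivially optimal). So the interesting case is $P_{12} > 0$ and $P_{21} > 0$, in which $\WCG$ has exactly one simple directed cycle, namely the $2$-cycle on links $(1,2)$ and $(2,1)$. This is trivially a single \emph{disjoint} cycle in the sense required by Theorem \ref{thm:disjoint}.

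Next I would verify the two substantive hypotheses of Theorem \ref{thm:disjoint}: that the packets on the cycle are (i) distinct and (ii) unicast. For (i): a packet $p$ contributing to link $(1,2)$ satisfies $p \in \script{H}_1 \cap \script{R}_2$, while a packet contributing to link $(2,1)$ satisfies $p \in \script{H}_2 \cap \script{R}_1$; since $\script{H}_n \cap \script{R}_n = \phi$ for each $n$, no single packet can simultaneously belong to $\script{H}_1$ and $\script{R}_1$, so the two link-sets are disjoint. For (ii): a packet $p \in \script{H}_1 \cap \script{R}_2$ cannot also lie in $\script{R}_1$ (again by $\script{H}_1 \cap \script{R}_1 = \phi$), and user $2$ is the only other possible destination, so $p$ has a unique destination; symmetrically for link $(2,1)$. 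Thus both hypotheses hold.

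Applying Theorem \ref{thm:disjoint} with $K=1$ and $w_{min}^{(1)} = \min[P_{12}, P_{21}] = w_{min}$ yields
\[ T_{min}(\script{G}) = P - w_{min}, \]
and the theorem's construction specializes to: perform a $2$-cycle coding action $w_{min}$ times (each sending the XOR of one packet from link $(1,2)$ with one from link $(2,1)$ in a single slot, clearing two packets per slot), then transmit the remaining $P - 2w_{min}$ packets uncoded, for a total of $w_{min} + (P - 2w_{min}) = P - w_{min}$ slots. Any packets belonging to neither link of $\WCG$ (e.g.\ packets in $\script{R}_1 \cap \script{R}_2$ with no side information, or packets with no side information and a single destination) simply fall into the uncoded batch and are accounted for in the $P - 2w_{min}$ remainder.

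The only mildly delicate step is the unicast/distinctness verification in the second paragraph; everything else is a direct invocation of Theorem \ref{thm:disjoint} (for $w_{min} > 0$) or Theorem \ref{thm:acyclic} combined with Lemma \ref{lem:compress} (for $w_{min} = 0$). I do not expect any technical obstacle beyond carefully unpacking the definition of $\WCG$ for $N = 2$.
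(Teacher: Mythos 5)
Your proposal is correct and follows essentially the same route as the paper: verify that for $N=2$ the graph $\WCG$ has at most one (hence disjoint) cycle and that all packets on that cycle are distinct unicast packets (using $\script{H}_n \cap \script{R}_n = \phi$, just as the paper does, though you invoke the condition at user $1$ where the paper invokes it at user $2$), and then apply Theorem \ref{thm:disjoint} with $K=1$. Your separate treatment of the $w_{min}=0$ case is a harmless elaboration of what the paper handles implicitly.
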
 
 \begin{proof} 
 It suffices to show that $\WCG$ has disjoint cycles, and that all packets that are part of these cycles are distinct unicast packets. 
 The graph $\WCG$ has only 2 nodes and hence at most one cycle, so it clearly satisfies the 
 disjoint cycle criterion.  All packets of the cycle are clearly distinct. Indeed, two packets on different links of the 
 cycle must have one packet on link $(1,2)$ and the other on link $(2,1)$, and hence the first is desired by user 2, 
 while the second is \emph{not} desired by user 2 (so they cannot be the same packet). 
 It remains to show that all packets in the cycle are unicast packets. If there are no cycles, we are
 done.  If there is a cycle, this involves link $(1,2)$ and link $(2,1)$.  Any packet associated with the link $(1,2)$ 
 must already be in the set of packets that node $1$ has, and so this packet only requires transmission to node $2$ (not to node $1$).  Thus, this packet must be a unicast packet.  Similarly, any packet associated with link $(2,1)$ must be  a unicast packet. 
 \end{proof} 

\section*{Appendix F --- Optimality of Cyclic Coding for Relay Networks with $N=3$}

Consider now the special case of broadcast relay networks, where each packet node of  
the graph $\script{G}$ has exactly one incoming link and one outgoing link.  There are $N=3$ users
and $P$ packets (where $P$ is a positive integer). This can be exactly 
represented by the weighted
compressed graph $\WCG$ with link weights $P_{ij}$, where $P = \sum_{i=1}^N\sum_{j=1}^NP_{ij}$.  We want to show that minimum
clearance time can be achieved by cyclic coding (using either direct transmission, 2-cycle code actions, 
or 3-cycle code actions).  If the 3-node graph $\WCG$ has
disjoint cycles, we are done (recall Theorem \ref{thm:disjoint-cycles}). 

\begin{figure}[htbp]
   \centering
   \includegraphics[height=2.5in, width=3in]{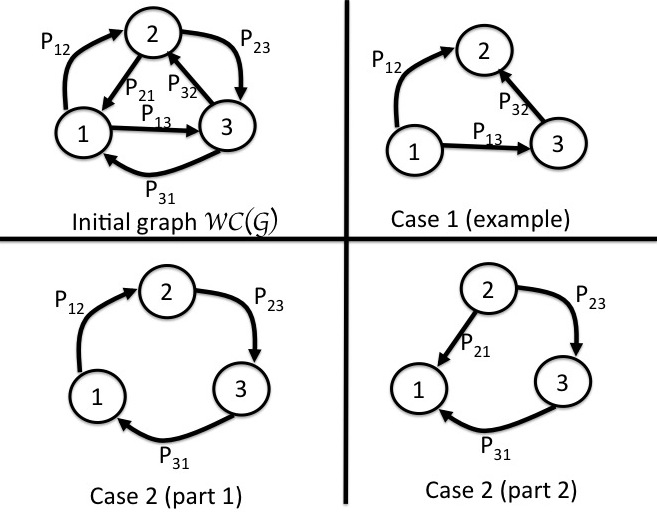} % requires the graphicx package
   \caption{An illustration of the general broadcast relay graph $\WCG$ with $N=3$ users, and the two cases required
   for the proof.}
   \label{fig:3case}
\end{figure}

Consider now the general case with possibly non-disjoint cycles, as shown in Fig. \ref{fig:3case}. 
To represent the general case, we allow link weights $P_{ab}$ to possibly be $0$ (a weight of $0$ is equivalent
to the absence of a link). 
First define $min_{12}$, $min_{23}$, $min_{31}$ as the 
weight of the min-weight link for each of the three possible 2-cycles: 
\begin{eqnarray*}
min_{12} &=& \min[P_{12}, P_{21}] \\
min_{23} &=& \min[P_{23}, P_{32}] \\
min_{31} &=& \min[P_{31}, P_{13}] 
\end{eqnarray*}
Now prune the graph $\WCG$ by removing
the min-weight link for each 2-cycle.  This results in a graph $\script{WC}(\script{G}')$ with 3 nodes and (at most)
3 links, and with a total number of 
packets equal to $P- min_{12} - min_{23} - min_{31}$, as shown in Cases 1 and 2 in the figure.  
We have two cases. 

Case 1: The resulting graph $\script{WC}(\script{G}')$  is acyclic.  An example of this case is shown as case 1 in Fig. \ref{fig:3case}. 
Thus, we know $T_{min}(\script{G}) \geq T_{min}(\script{G}') = P -  min_{12} - min_{23} - min_{31}$. 
However, it is easy to see this clearance time bound 
can be achieved by using 2-cycle code actions on each of the three 2-cycles, and then
transmitting the remaining packets uncoded. 

Case 2: The resulting graph $\script{WC}(\script{G}')$ consists of a single  3-cycle.  The cycle must either be clockwise
or counter-clockwise.  Without loss of generality, assume clockwise (see Fig. \ref{fig:3case}, case 2 part 1).  
Note that: 
\begin{eqnarray}
P_{12} \geq P_{21} \: , \:  P_{23} \geq P_{32} \: , \: P_{31} \geq P_{13} \label{eq:minweight-links} 
\end{eqnarray}
This is because we have formed $\script{WC}(\script{G}')$ by removing the min-weight link on each of the three 2-cycles. 

Let $z = \min[P_{12} - P_{21}, P_{23} - P_{32}, P_{31} - P_{13}]$, so that $z$ is a non-negative integer. 
Without loss of generality, assume the min value for $z$ is achieved by 
link $(1,2)$, so that $z = P_{12} - P_{21}$. 
To $\script{WC}(\script{G}')$, add back the link $(2,1)$ (with weight $P_{21}$), 
and remove the link $(1,2)$, to yield a graph $\script{WC}(\script{G}'')$ that is an acyclic subgraph of the 
original graph $\WCG$, 
as shown in case 2 part 2 of Fig. \ref{fig:3case}. The acyclic subgraph $\script{WC}(\script{G}'')$ 
contains exactly $P_{21} + P_{23} + P_{31}$ packets. Thus, the 
minimum clearance time of the original graph $\WCG$ 
is at least $P_{21} + P_{23} + P_{31}$.  However, this can easily be achieved.   
Do the following:  Perform 2-cycle code actions on each of the three 2-cycles of the original graph $\WCG$, 
to remove a number of 
packets on each cycle equal to the min weight link of that cycle. This removes 
$2P_{21}  + 2P_{32}  + 2P_{13}$
packets in $P_{21} + P_{32} + P_{13}$ slots (recall the three min weights are given by 
\eqref{eq:minweight-links}). 
Then perform the 3-cycle code action to remove $3z$ packets
in $2z$ slots. Then perform direct transmission to remove the remaining packets, being
a total of $x = P  -2P_{21} - 2P_{32} - 2P_{13}  - 3z$.  The total number of transmissions is: 
\begin{eqnarray*}
&&\hspace{-.4in}P_{21} + P_{32} + P_{13}  + 2z + x  \\
&=&  P_{21} + P_{32} + P_{13}  
+ 2z \\
&&+ P  -2P_{21} - 2P_{32} - 2P_{13}  - 3z \\
&=& P - P_{21} - P_{32} - P_{13} - z \\
&=& P_{12} + P_{23} + P_{31} - z \\
&=& P_{12} + P_{23} + P_{31} - (P_{12} - P_{21}) \\
&=& P_{21} + P_{23} + P_{31} 
\end{eqnarray*}
and so the above scheme is optimal. 

% ------------------------------------------------------------------------
%GATHER{Xbib.bib}   % For Gather Purpose Only
%GATHER{Paper.bbl}  % For Gather Purpose Only
\bibliographystyle{unsrt}
\bibliography{../../latex-mit/bibliography/refs}

\begin{thebibliography}{10}

\bibitem{two-way-net-code}
Y.~Wu, P.~A. Chou, and S-Y Kung.
\newblock Information exchange in wireless networks with network coding and
  physical-layer broadcast.
\newblock {\em Conference on Information Sciences and Systems, Johns Hopkins
  University}, March 2005.

\bibitem{cope-05}
S.~Katti, D.~Katabi, W.~Hu, H.~Rahul, and M.~M\'{e}dard.
\newblock The importance of being opportunistic: practical network coding for
  wireless environments.
\newblock {\em Proc. 43rd Annual Allerton Conf. on Communication, Control, and
  Computing}, Oct. 2005.

\bibitem{cope-06}
S.~Katti, H.~Rahul, W.~Hu, D.~Katabi, M.~M\'{e}dard, and J.~Crowcroft.
\newblock Xors in the air: Practical wireless network coding.
\newblock {\em Proc. ACM SIGCOMM}, 2006.

\bibitem{birk-index98}
Y.~Birk and T.~Kol.
\newblock Informed-source coding-on-demand (iscod) over broadcast channels.
\newblock {\em Proc. IEEE INFOCOM}, 1998.

\bibitem{birk-index-it06}
Y.~Birk and T.~Kol.
\newblock Coding-on-demand by an informed source (iscod) for efficient
  broadcast of different supplemental data to caching clients.
\newblock {\em IEEE Transactions on Information Theory}, vol. 52, pp.
  2825-2830, 2006.

\bibitem{bar-yossef-index-code2011}
Z.~Bar-Yossef, Y.~Birk, T.~S. Jayram, and T.~Kol.
\newblock Index coding with side information.
\newblock {\em IEEE Transactions on Information Theory}, vol. 57, no. 3, March
  2011.

\bibitem{sno-text}
M.~J. Neely.
\newblock {\em Stochastic Network Optimization with Application to
  Communication and Queueing Systems}.
\newblock Morgan \& Claypool, 2010.

\bibitem{tass-server-allocation}
L.~Tassiulas and A.~Ephremides.
\newblock Dynamic server allocation to parallel queues with randomly varying
  connectivity.
\newblock {\em IEEE Transactions on Information Theory}, vol. 39, no. 2, pp.
  466-478, March 1993.

\bibitem{nonlinear-index-coding}
E.~Lubetzky and U.~Stav.
\newblock Nonlinear index coding outperforming the linear optimum.
\newblock {\em IEEE Transactions on Information Theory}, vol. 55, no. 8, pp.
  3544-3551, Aug. 2009.

\bibitem{lyap-opt2}
M.~J. Neely.
\newblock Queue stability and probability 1 convergence via lyapunov
  optimization.
\newblock {\em Arxiv Technical Report}, Oct. 2010.

\end{thebibliography}
\end{document}